\newtheorem{theorem}{Theorem}
\definecolor{mygreen}{RGB}{28,172,0} 
\definecolor{mylilas}{RGB}{170,55,241}
\title{The Adversarial Implications of Variable-Time Inference}
\author{Dudi Biton$^{1}$, Aditi Misra$^{2}$, Efrat Levy$^{1}$, Jaidip Kotak$^{1}$, Ron Bitton$^{1}$, Roei Schuster$^{4}$, Nicolas Papernot$^{2,3}$,  Yuval Elovici$^{1}$, Ben Nassi$^{1,5}$\\\

$^{1}$Ben-Gurion University of the Negev, $^{2}$University of Toronto, $^{3}$Vector Institute, $^{4}$Wild Moose, $^{5}$Cornell Tech
\\ 
\{bitondud, elevy, jaidip, ronbit, nassib\}@post.bgu.ac.il,
\{aditi.misra, nicolas.papernot\}@mail.utoronto.ca,
roei@wildmoose.ai, elovici@bgu.ac.il, bn267@cornell.edu}
\begin{document}
\maketitle
\thispagestyle{empty}

\begin{abstract}

Machine learning (ML) models are known to be vulnerable to a number of attacks that target the integrity of their predictions or the privacy of their training data. To carry out these attacks, a black-box adversary must typically possess the ability to query the model and observe its outputs (e.g., labels). In this work, we demonstrate, for the first time, the ability to enhance such decision-based attacks.
To accomplish this, we present an approach that exploits a novel side channel in which the adversary simply measures the execution time of the algorithm used to post-process the predictions of the ML model under attack.
The leakage of inference-state elements into algorithmic timing side channels has never been studied before, and we have found that it can contain rich information that facilitates superior timing attacks that significantly outperform attacks based solely on label outputs. In a case study, we investigate leakage from the non-maximum suppression (NMS) algorithm, which plays a crucial role in the operation of object detectors. In our examination of the timing side-channel vulnerabilities associated with this algorithm, we identified the potential to enhance decision-based attacks. We demonstrate attacks against the YOLOv3 detector, leveraging the timing leakage to successfully evade object detection using adversarial examples, and perform dataset inference. Our experiments show that our adversarial examples exhibit superior perturbation quality compared to a decision-based attack. In addition, we present a new threat model in which dataset inference based solely on timing leakage is performed. To address the timing leakage vulnerability inherent in the NMS algorithm, we explore the potential and limitations of implementing constant-time inference passes as a mitigation strategy.

\end{abstract}

\section{Introduction}

Known threats against machine learning (ML) such as adversarial examples~\cite{wang2019security, biggio2018wild}, dataset/membership inference~\cite{shokri2017membership}, and model stealing~\cite{tramer2016stealing} typically assume an adversary who has at least black-box-querying access to the victim model. 
Such adversaries are inherently constrained due to the limited information the model returns for a given query. In practice, prediction APIs often provide \textit{decision-only} outputs (also called ``label-only'' outputs). 
This is generally considered the minimum amount of information available to querying attackers.

In addition, querying adversaries almost always have the ability to measure inference runtime. 
Previously, this was not thought to be a major advantage, perhaps because the neural inference is often thought of as a series of floating-point arithmetic operations with roughly consistent runtime across inputs.
However, the runtimes of many modern architectures can vary widely, especially if they include hybrid neural and non-neural components. 
For example, multi-exit networks~\cite{teerapittayanon2016branchynet, huang2017multi} optimize runtime by terminating confident predictions early; generative language models use sampling and search algorithms to produce full sentences from next-token predictions~\cite{holtzman2019curious}; and object detectors use non-maximum suppression (NMS) to refine their detection~\cite{redmon2017yolo9000,girshick2015fast,he2017mask,liu2016ssd}.

Knowledge of the timing (i.e., runtime) can benefit attackers by providing them with information on internal computation related to inference.
It is widely known that such information, particularly prediction-confidence scores, can be leveraged to mount attacks that are more potent than label-only attacks.
In decision-only attacks such as adversarial examples and dataset/membership inference attacks, many queries are required to estimate gradients or confidence around every point, and such attacks often produce inferior results~\cite{choquette2021label, chen2020hopskipjumpattack}. 
Thus, if timing leakage reveals meaningful information about inference computation, we expect attackers to use it, along with the prediction provided by the neural network.

In this paper, we demonstrate the significant advantages provided by the ability to measure inference runtime, specifically in the context of object detectors. This ability not only benefits existing adversaries but also opens up new avenues for previously unexplored attacks targeted at object detection systems. By leveraging runtime data, adversaries can improve decision-based attacks and even launch new time-based attacks.

To demonstrate the practicality of our attacks, as a case study, we consider the two-phase inference procedure of object detectors. 
They work as follows: first, a learned/neural component outputs a large set of image bounding boxes that are predicted to contain an object; second, because the previous step's output is likely to contain multiple bounding boxes per object, it serves as input to the non-maximum suppression (NMS) algorithm which tries to filter out overlapping bounding boxes. 
We make two key observations regarding this inference procedure: (1) highly confident predictions of the neural component are more likely to be detected, and (2) the NMS runtime is highly affected by the number of bounding boxes in its input. 
As a corollary, the \emph{NMS runtime closely corresponds to the raw confidence values output by the neural component}.
Leveraging this close correspondence, we mount two attacks, the first of which finds adversarial examples, and the second of which leaks private information through dataset inference.

Our contributions can be summarized as follows: 
\begin{itemize}
    \item We are the first to study algorithmic runtime leakage in hybrid neural/non-neural inference procedures.
    Focusing on the case of the NMS algorithm, we quantify the correspondence between timing and internal inference elements, showing that adversaries can learn accurate information about confidence and bounding box input to the NMS algorithm and that the signal-to-noise of this leakage can be amplified arbitrarily and is easily observed in a realistic deployment, such as remotely over an HTTP connection.
    \item We demonstrate that an attacker leveraging this timing advantage can successfully find adversarial examples and perform dataset inference. We show that by leveraging the timing leak, our adversarial examples yield better results at the $L_2$ norm than traditional decision-based attacks. We also perform dataset inference based solely on the timing leak, which is a novel threat model. We provide the implementation for creating our adversarial examples in an open-source repository\footnote{https://github.com/dudi709/Timing-Based-Attack\label{fn:evasion-github}}.
\end{itemize}

The remainder of the paper is structured as follows: in Section \ref{sec:leakage}, we characterize and quantify the NMS algorithm's timing leakage.
In Sections \ref{sec:evasion} and \ref{sec:membership-inference}, we present two new attacks that exploit this leakage to perform an evasion attack and dataset inference, and evaluate their performance. In Section \ref{sec:related-work}, we review related work. We discuss countermeasures in Section \ref{sec:countermeasures} and limitations in Section \ref{sec:limitations}. In Section \ref{sec:discussion}, we discuss the findings of the study and our plans for future work.
\section{Background}
\label{sec:background}

\begin{figure*}
	\centering
	\includegraphics[width=0.85\linewidth]{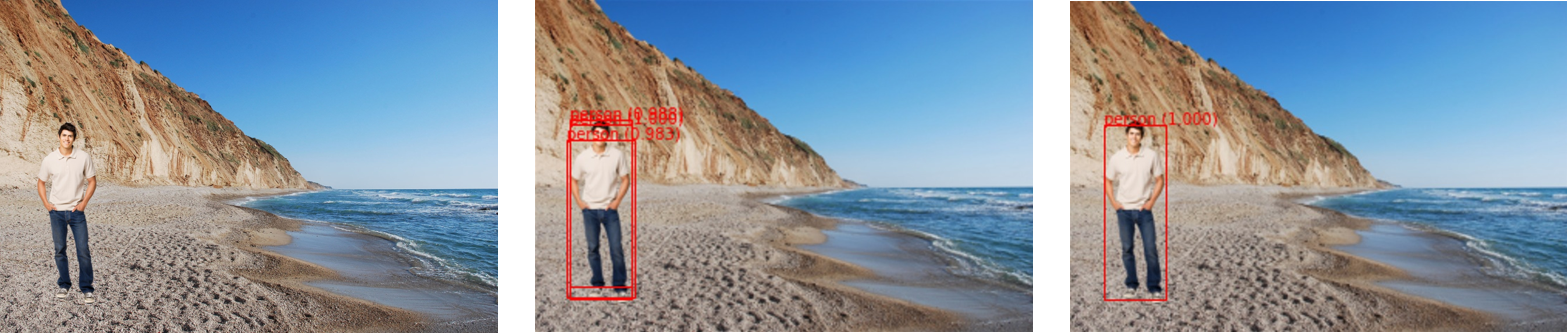}
	\caption{Stages of object detection: original picture (left), the three boundaries (1.0, 0.98, 0.98) annotated by the neural network (middle), and suppression to a single detection by the NMS algorithm (right).}
	\label{fig:object-detection}
\end{figure*}

\paragraph{Object detection}
Object detectors are algorithms that identify objects of a certain class (e.g., pedestrians, road signs, cars) in images. There are two main types of object detectors: those that output the class label along with a confidence score, and those that only output the class label.
During inference, most object detectors perform a neural forward pass, which is followed by the implementation of the NMS algorithm. This process is depicted in Fig. \ref{fig:object-detection}, which presents the original image, the bounding boxes around the detected person (the output of the neural network before applying the NMS algorithm), and the final output (after the NMS algorithm has been applied).
The neural network architecture works by dividing the image into a collection of frames/bounding boxes (potentially overlapping) and assigns a separate score or confidence level to each bounding box indicating the detector's confidence that the object is present in that frame. These scores are used to filter the frames and enable the selection of just those frames that cross a certain threshold and are likely to contain an object. However, this process often results in multiple bounding boxes that represent the same object.
To refine the output further, the NMS algorithm is applied with the purpose of producing a single high-quality frame per object. The algorithm operates by comparing the overlapping bounding boxes and filtering out those with lower scores, ensuring that the optimal bounding box remains for each object.

\begin{algorithm}[h]
\caption{Non-Maximum Suppression}
\label{alg:nms}
\begin{algorithmic}[1]
\State \textbf{Input}: $B = \{B_1,..,B_n\}$, $S = \{S_1,..,S_n\}$, $N_t$ 
\State {B is the list of initial detection boxes}
\State {S contains corresponding detection scores}
\State {$N_t$ denotes the NMS thresholds}

\State \textbf{Initialization}: 
\State $D \leftarrow \{\}$ 
\While {$B^c \neq empty$}
	\State $m \leftarrow argmax(S)$ 
	\State $M \leftarrow b_m$
	\State $D \leftarrow D \cup M$ 
	\State $B \leftarrow B-M$ 
	\For{$b_i \in B$}
		\If{$IoU(M, b_i) \geq N_t$}
			\State $B \leftarrow B-b_i$ 
			\State $S \leftarrow S-s_i$ 
		\EndIf
\EndFor	
\EndWhile
\State \textbf{Output}: $D$, $S$
\end{algorithmic}
\end{algorithm}

\paragraph{Non-maximum suppression (NMS) algorithm}
The NMS algorithm operates based on the principle that two highly overlapping bounding boxes, as indicated by an intersection-over-union (IoU) value that crosses a certain threshold, likely refer to the same object. 
By far the most commonly used approach to implement the NMS algorithm (although improvements~\cite{rothe2014non} and even neural approaches~\cite{hosang2017learning} have been suggested) is a greedy algorithm (see Algorithm~\ref{alg:nms}) which iteratively finds highly overlapping bounding box pairs and filters out the one with the lower score.
More specifically, given a set of input bounding boxes, the NMS algorithm iteratively finds the bounding box with the highest score in the set, removes all other bounding boxes that overlap with it from the input set, and moves the one with the highest score from the input set to the output set. It stops when there are no more bounding boxes in the input set.
 
This greedy NMS algorithm (referred to simply as NMS in the remainder of the paper) finds and filters out all bounding boxes of the same object from the input set in every iteration of the outer loop, making its runtime complexity (in terms of the number of bounding box comparisons) $\Theta(o\cdot B)$, where $B$ is the number of bounding boxes input and $o$ is the number of detected objects (usually $o<<B$).
Bounding box comparisons on their own do not usually run in constant time due to the variation in their size. This can result in variable-time inference, despite the tight bounds on the framework-comparison-number. As a consequence, we can expect differences in the actual runtime among inputs.
As discussed later in the paper, we find that $\Theta(o\cdot B)$ is a good proxy in practice.
\section{Profiling NMS Timing Leakage}
\label{sec:leakage}

The characterization of the NMS algorithm's runtime complexity provided in Section~\ref{sec:background} supports the fact that the number of bounding boxes dramatically affects its runtime. 
In this section, we empirically profile this effect, with the aim of quantifying the degree to which potentially sensitive internal inference-state components, such as the number of bounding boxes or confidence scores, leak into runtime measurements which can be noisy.

\paragraph{Experimental setup}
In the set of experiments that follow, we used the COCO-MS dataset \cite{lin2014microsoft} and the official implementation of YOLOv3 \cite{redmon2016you}, which includes the greedy NMS algorithm described in Section~\ref{sec:background}. We downloaded a pretrained version of YOLOv3, and the pretrained weights were downloaded from the YOLO repository.\footnote{\label{keras} \url{https://github.com/experiencor/keras-yolo3}}  
The rest of YOLO's code was implemented using the scripts provided in \cite{brownlee2019deep}.
We performed inference on an RTX 2080 Ti machine with SIX CPU cores and 32 GB RAM and sampled the execution time of the neural network, NMS algorithm, and YOLO using Python's \texttt{time} module.

\begin{figure}
\setlength{\columnsep}{0.5em}
\begin{multicols}{2}
  \begin{subfigure}[b]{\columnwidth}
    \includegraphics[width=\linewidth]{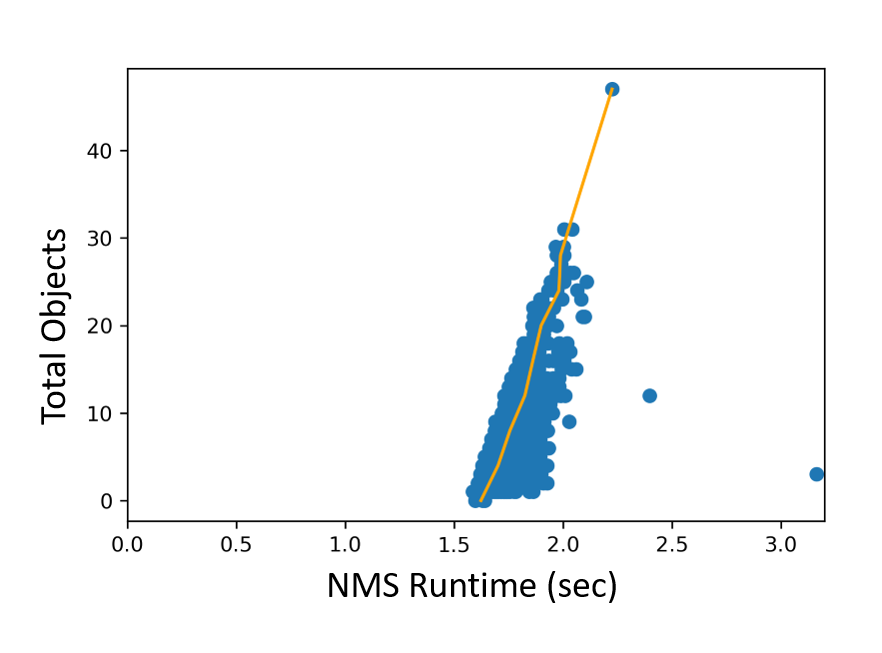}
    \caption{NMS execution time vs. number of objects.}
    \label{fig:nms_objects}
  \end{subfigure}
  \par 
  \begin{subfigure}[b]{\columnwidth}
    \includegraphics[width=\linewidth]{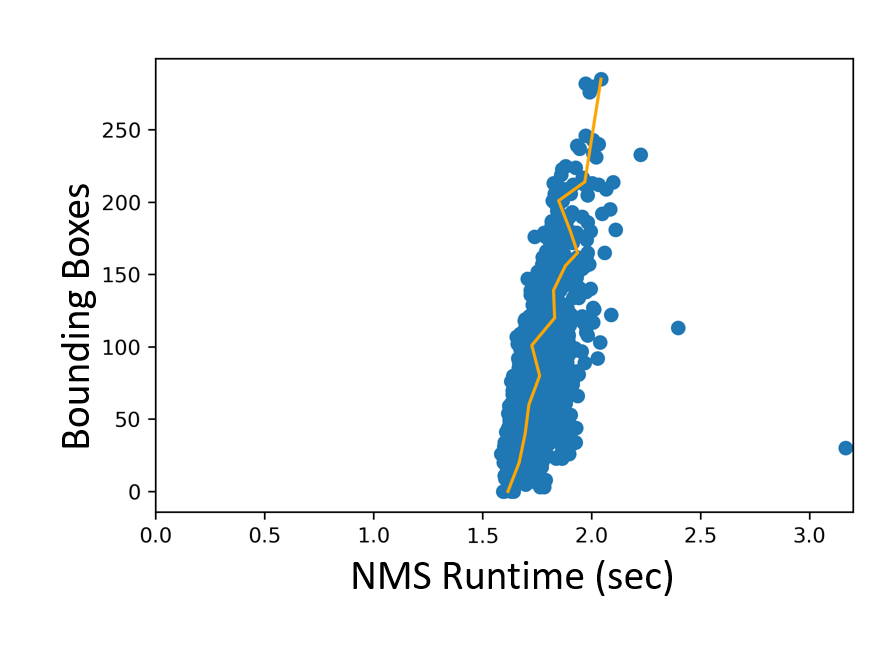}
    \caption{NMS execution time vs. number of bounding boxes.}
    \label{fig:nms_bb}
  \end{subfigure}
  \par 
\end{multicols}
\begin{multicols}{2}
  \begin{subfigure}[b]{\columnwidth}
    \includegraphics[width=\linewidth]{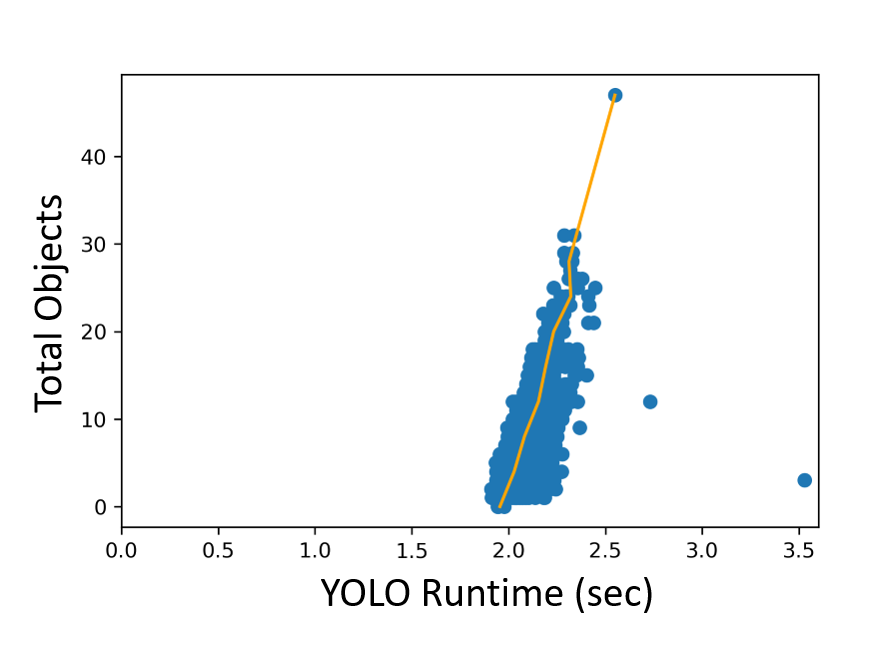}
    \caption{YOLO execution time vs. number of objects.}
    \label{fig:yolo_objects}
  \end{subfigure}
  \par 
  \begin{subfigure}[b]{\columnwidth}
    \includegraphics[width=\linewidth]{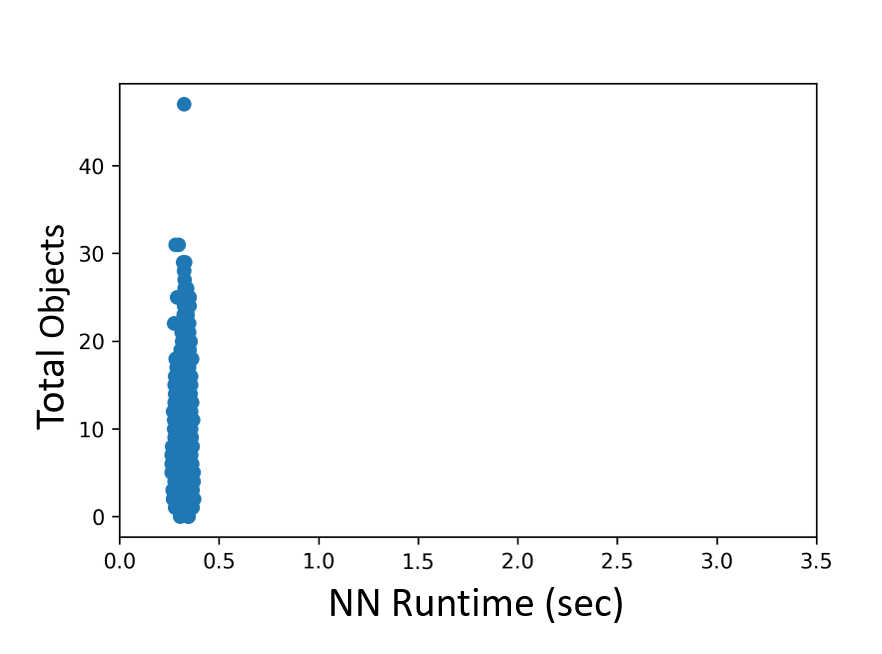}
    \caption{Neural network execution time vs. number of objects.}
    \label{fig:nn_objects}
  \end{subfigure}
\end{multicols}
\caption{The runtime of various YOLOv3 components as a function of the number of objects that appear in the images. Subfigures a and b depict the NMS algorithm's execution time, while subfigures c and d show YOLO's total inference time and the neural network's execution time, respectively.} 
\label{fig:analysis}
\end{figure}

\paragraph{Correspondence between the number of bounding boxes for a varied number of objects that appear in an image and the NMS and YOLO runtime}
Given an image, in which the number of objects varies, we expect the NMS runtime to increase correlatively as the number of objects in the image increases (see Section~\ref{sec:background}). 
To verify this, we randomly selected 1,500 images from the dataset, each of which contain up to 50 objects, and sent them to YOLO. 
Then we sampled the execution time of the neural network and the NMS algorithm and the end-to-end inference time of YOLO.
The results are presented in Fig. \ref{fig:analysis}, where the orange line indicates the average behavior. Figs. \ref{fig:nms_objects} and \ref{fig:nms_bb} show that the execution time of the NMS algorithm increases as a function of both the number of objects and the number of bounding boxes.
Moreover, there is a linear connection between the number of objects in an image and the execution time of the NMS algorithm. 
An image with more objects creates more bounding boxes, which results in more merging operations required by the algorithm.
Fig. \ref{fig:nn_objects} shows that the execution time of the neural network is fixed and is not affected by the number of objects that appear in an image, while Fig. \ref{fig:yolo_objects} shows that the linear correlation between the execution time and the number of objects affects YOLO's total runtime due to the fact that YOLO's execution time (excluding the execution time of the NMS algorithm) is fixed. 
In addition, Figs. \ref{fig:yolo_objects} and \ref{fig:nms_objects} show that the difference in the execution times of YOLO and the NMS algorithm on two images, in which the first image contains a small number of objects (e.g., three) and the second image contains a large number of objects (e.g., 30) is much more pronounced than the difference in the execution times of YOLO and the NMS algorithm on two images that contain around the same number of objects (e.g., five and six). 

 
\paragraph{Correspondence between the number of bounding boxes for a single object and the NMS runtime}
Given a single object, we expect the NMS runtime to be highly correlated with the number of bounding boxes (see Section~\ref{sec:background}). 
To verify this, we randomly selected 1,500 images from the dataset, each of which contain up to 30 objects.
The images were handled differently, depending on whether they contained just one object or more than one object; we extracted the frames that only contain one object; this is done automatically by iterating the images and executing YOLOv3 with a threshold of 0.6. 
Images containing a single object remained as is. For each image that contains more than one object, we cropped each object by its detected bounding box and expanded it by 20 pixels in each direction, and used it as an image on its own. 
We resized the height and width of images to the nearest multiple of 32 due to YOLO's input size requirements, and we ran an inference test on them.
The results are presented in Fig. \ref{fig:localnoamp}, verifying a high Spearman correlation of 0.83.


\begin{figure}
\setlength{\columnsep}{0.5em} 
\begin{multicols}{2}
  \begin{subfigure}[b]{\columnwidth}
    \includegraphics[width=\linewidth]{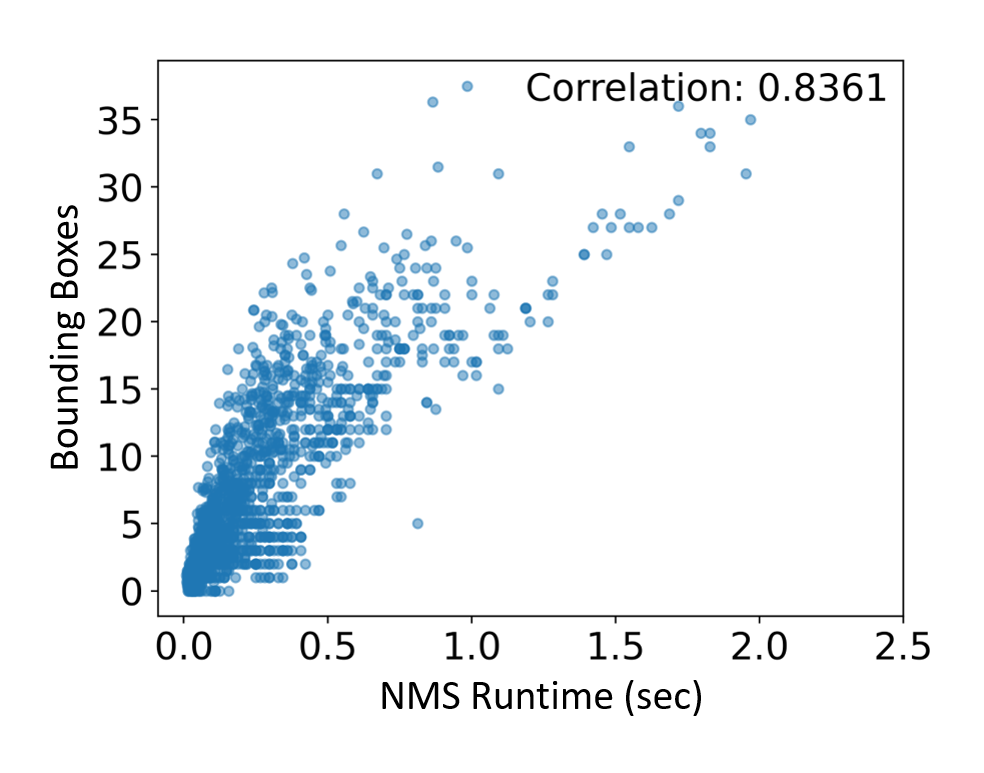}
    \caption{No amplification.}
    \label{fig:localnoamp}
  \end{subfigure}
  \par 
  \begin{subfigure}[b]{\columnwidth}
    \includegraphics[width=\linewidth]{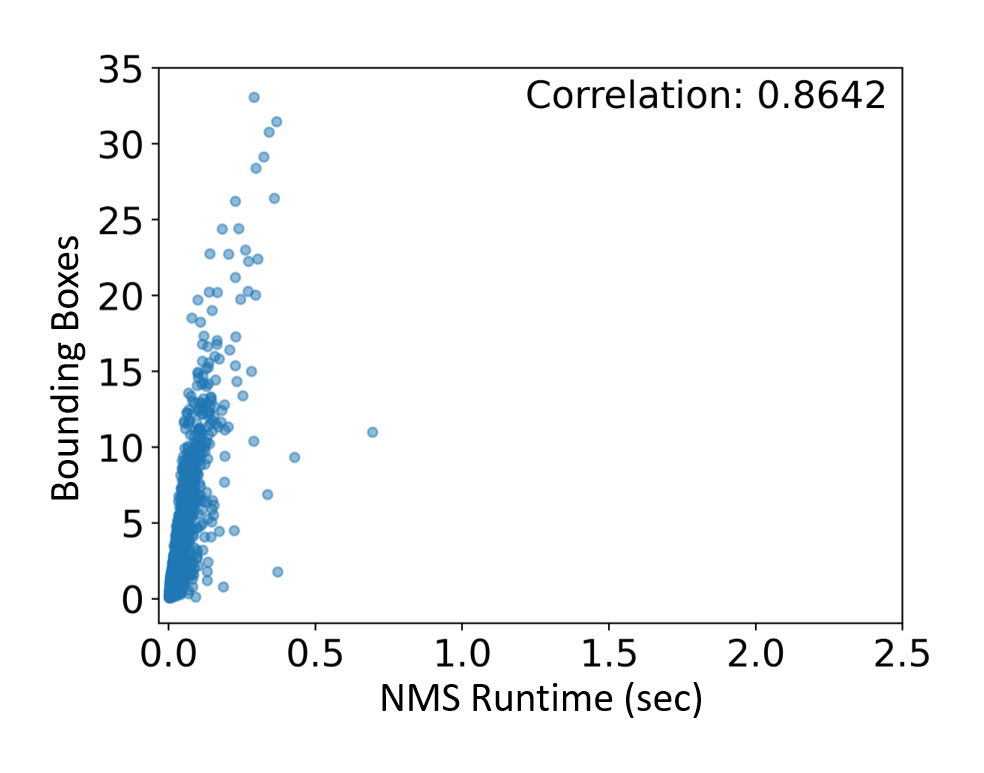}
    \caption{3x3 amplification.}
    \label{fig:local33}
  \end{subfigure}
  \par 
\end{multicols}
\begin{multicols}{2}
  \begin{subfigure}[b]{\columnwidth}
    \includegraphics[width=\linewidth]{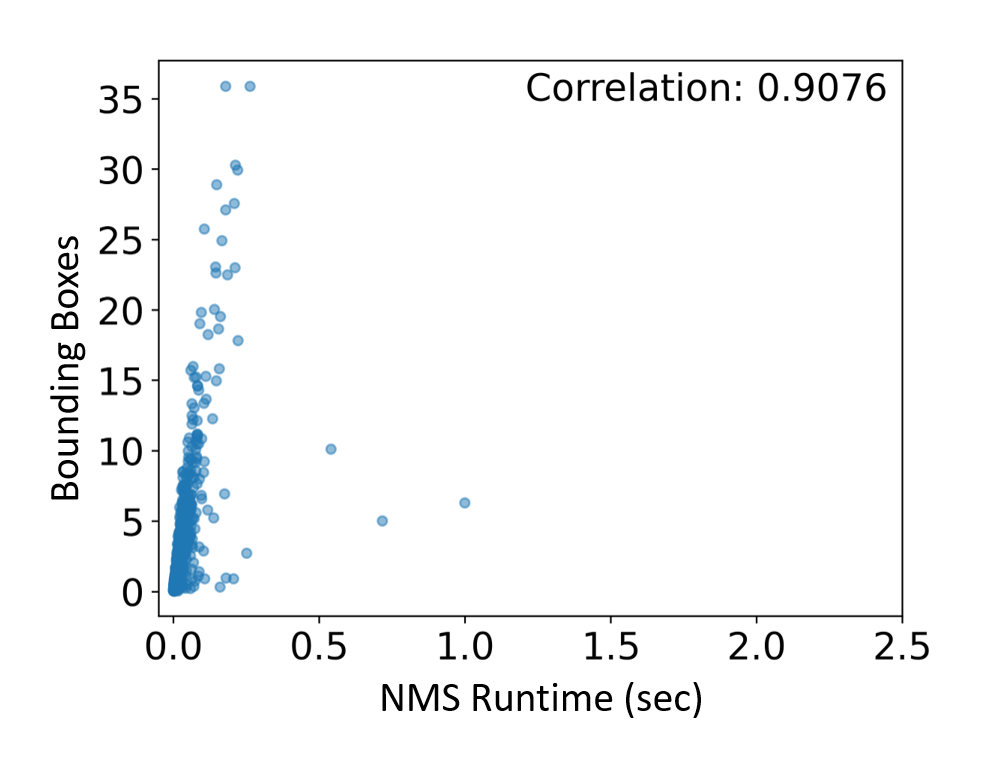}
    \caption{7x7 amplification.}
    \label{fig:local99}
  \end{subfigure}
  \par 
\end{multicols}
\caption{NMS runtime vs. the number of bounding boxes (local querying).}
\label{fig:bbs_nms_local}
\end{figure}

\paragraph{Amplifying leakage signal-to-noise}
Based on the findings of the experiments described above, we now examine whether we can conclude that if $\Theta(o\cdot B)$ is an approximation of the runtime (see Section~\ref{sec:background}), then we can amplify the effect of the runtime's correlation to the number of bounding boxes by tiling the same object image multiple times to form a new image. 
For example, if we concatenate an image containing a single object to itself five times, we expect a five-fold increase in the runtime cost incurred by the NMS algorithm for each bounding box.
This might be useful for an attacker aiming to extract fine-grained information regarding the number of bounding boxes. 
To quantify this effect, Figs. \ref{fig:local33} and \ref{fig:local99} show the runtime vs $B$ graphs of the above objects (those mentioned in the previous paragraph), tiled in a 3x3 and 7x7 pattern, respectively, and confirm higher correlations between the number of bounding boxes and the runtime (0.86 and 0.9, respectively).


\begin{figure}
\setlength{\columnsep}{0.5em} 
\begin{multicols}{2}
  \begin{subfigure}[b]{\columnwidth}
    \includegraphics[width=\linewidth]{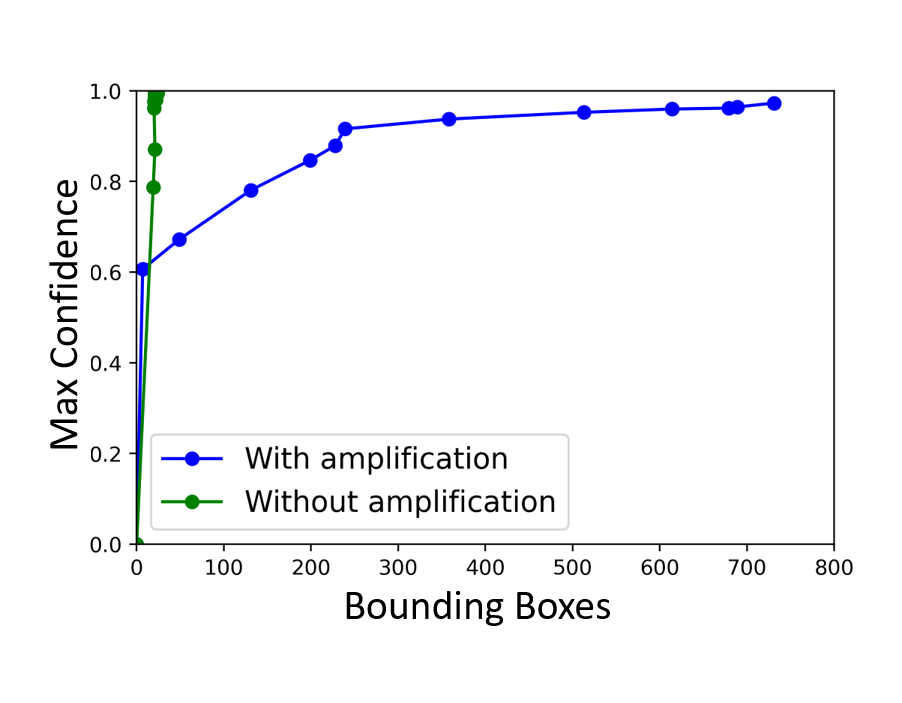}
  \end{subfigure}
  \par 
  \begin{subfigure}[b]{\columnwidth}
    \includegraphics[width=\linewidth]{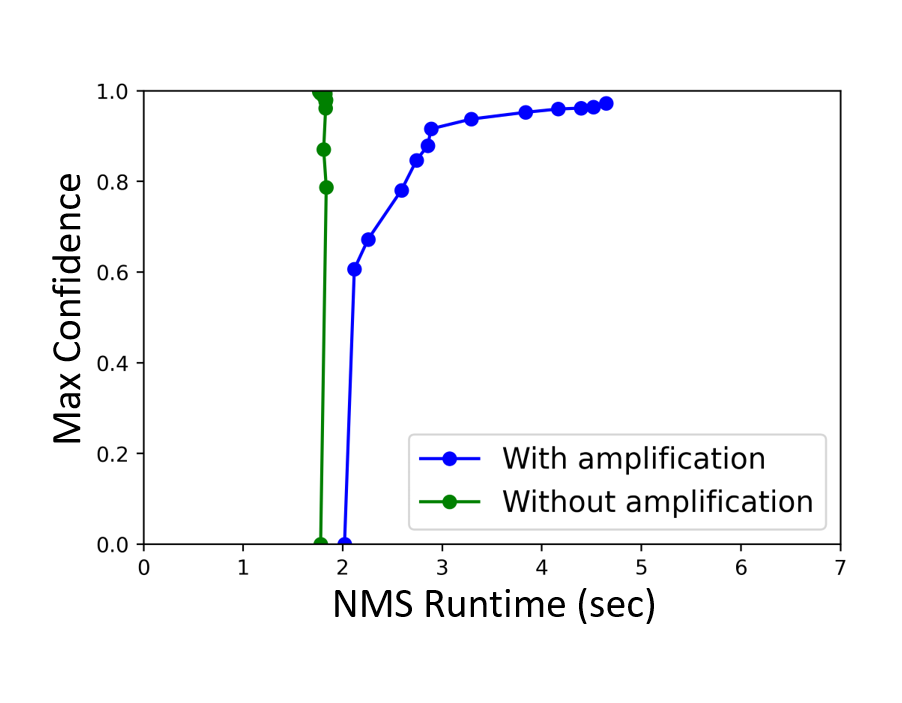}
  \end{subfigure}
  \par 
\end{multicols}
\begin{multicols}{2}
  \begin{subfigure}[b]{\columnwidth}
    \includegraphics[width=\linewidth]{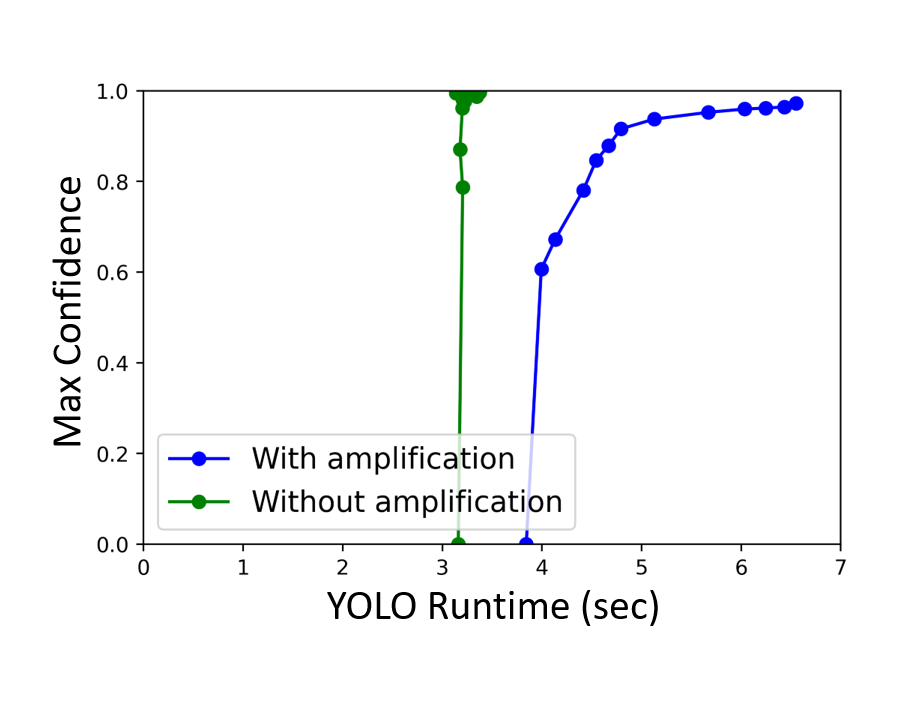}
  \end{subfigure}
  \par 
\end{multicols}
\caption{Confidence level vs. runtime of NMS and YOLO. YOLO was executed with a detection threshold of 0.6.}
\label{fig:amplification}
\end{figure}

\paragraph{Correspondence between leakage and model confidence}
We will see that the number of bounding boxes is closely related to the detector's confidence level. Higher confidence levels across bounding boxes that contain the same object will mean that more bounding boxes cross the first filtering threshold (see Section~\ref{sec:background}) and are input to the NMS algorithm. 
Due to the close relationship between the number of bounding boxes and runtime, we therefore expect the confidence scores to be tied to the runtime as well. 
To examine this, we ran YOLOv3 on the dataset and randomly chose 12 images from the dataset with varying confidence scores (0-1.0). 
Each of the 12 images was amplified (each image was concatenated to itself 140 times).
We ran YOLOv3 on the 12 original images and the amplified versions of the 12 images, and for each of the 24 images, we determined the correspondence between the runtime and the number of bounding boxes input to the NMS algorithm, as well as the confidence level of the bounding boxes output by the algorithm, with and without the tiling amplification procedure. 
Fig. \ref{fig:amplification} shows that amplification shows a strong relationship between the confidence scores and the runtime of both the NMS algorithm and YOLO.

\paragraph{Inferring NMS runtime from total inference runtime}
Attackers who can only query a detector in a black-box fashion would not be able to accurately measure the time of NMS as described in this section. 
Instead, they would only be able to measure the runtime of the entire inference process.
However, we note that this runtime is mainly comprised of (1) the neural network runtime, and (2) the NMS runtime. 
In a fully convolutional neural network like YOLOv3, we expect the first component to be linear with the input image size. 
Furthermore, when no objects are detected, the NMS runtime is negligible. 
Therefore, our querying attacker can use the following method to accurately predict the runtime for a given image size. For example, an attacker can (1) input all-black images of various sizes, for which we expect $B=0$, (2) measure their end-to-end runtime, and fit a linear model that predicts the neural component's runtime for an image of a given size. 
To demonstrate that example, we prepared the images by taking a totally black 416x416 pixel image $b$ and generating 78 new images as follows: for each $k\in\{1, 2,..., 39\}$, we generate two new images by resizing $b$ to both horizontal and vertical rectangles whose sizes are 416x[416+$(k*32)$] and [416+$(k*32)$]x416 pixels, respectively. 
Fig. \ref{fig:nruntimelocal} shows the resulting measurements of the images. 
Then, given a total-inference runtime, our attacker can simply subtract the predicted neural component's runtime from the total runtime to obtain an estimate of the NMS component's runtime. 
Fig. \ref{fig:estvsreallocal} shows the estimated runtime vs. real runtime for the local measurements obtained in our example.

\begin{figure}
\setlength{\columnsep}{0.5em} 
\begin{multicols}{2}
  \begin{subfigure}[b]{\columnwidth}
    \includegraphics[width=\linewidth]{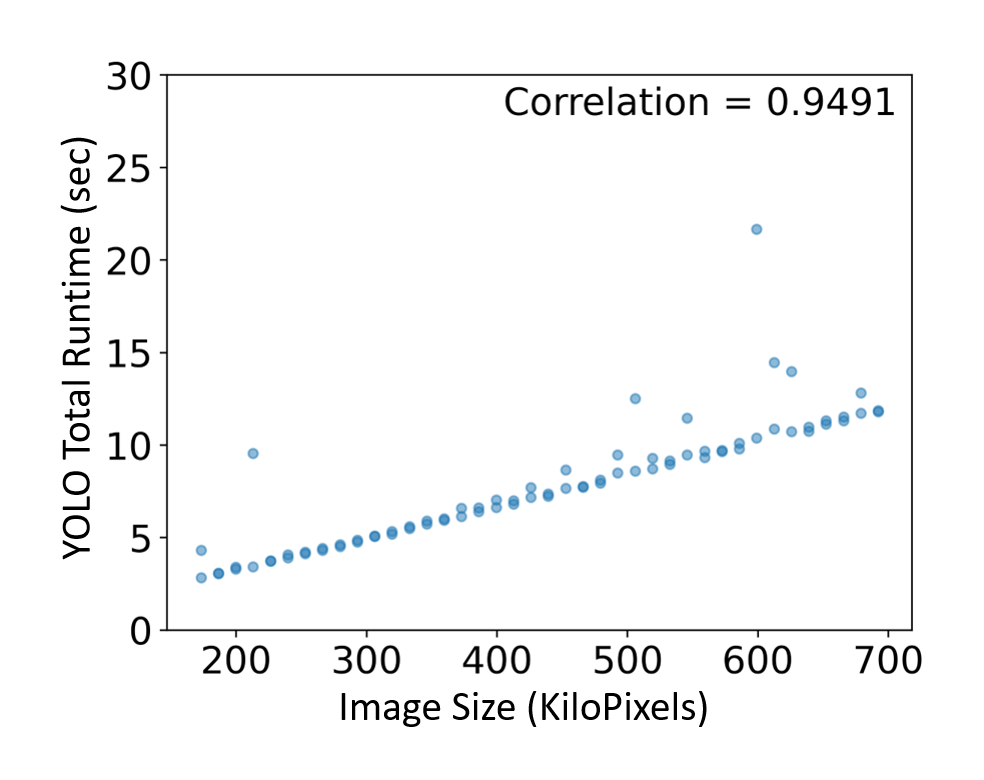}
    \caption{All-black image size vs. runtime (local).}
    \label{fig:nruntimelocal}
  \end{subfigure}
  \par 
  \begin{subfigure}[b]{\columnwidth}
    \includegraphics[width=\linewidth]{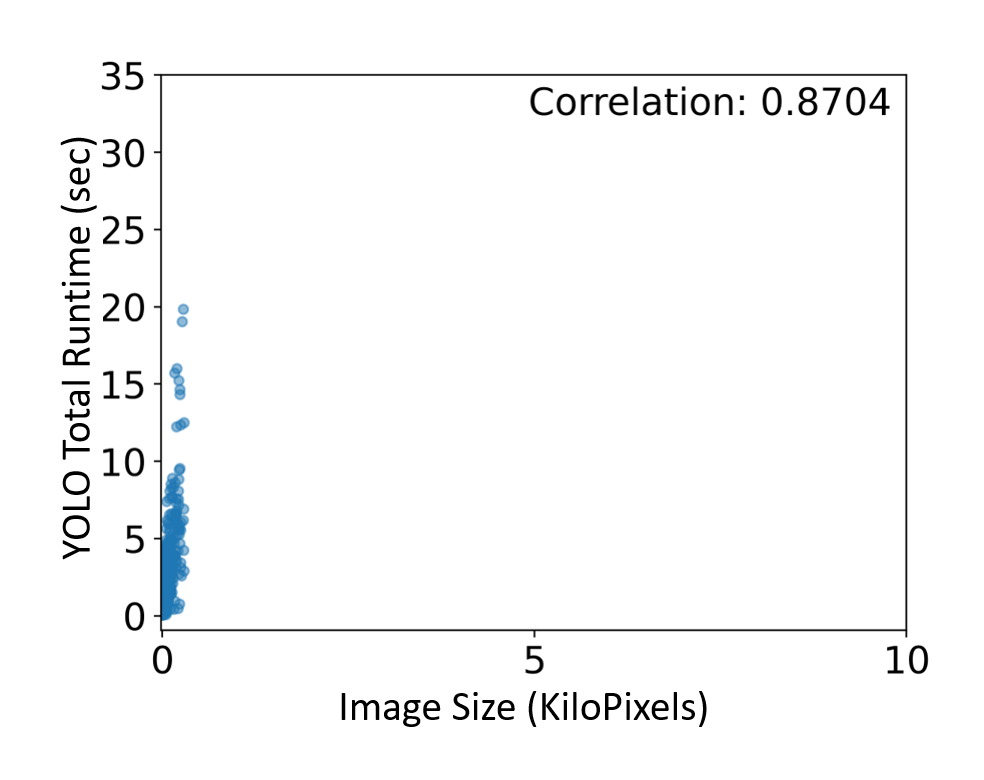}
    \caption{All-black image size vs. runtime (remote).}
    \label{fig:nruntimeremote}
  \end{subfigure}
  \par 
\end{multicols}
\begin{multicols}{2}
  \begin{subfigure}[b]{\columnwidth}
    \includegraphics[width=\linewidth]{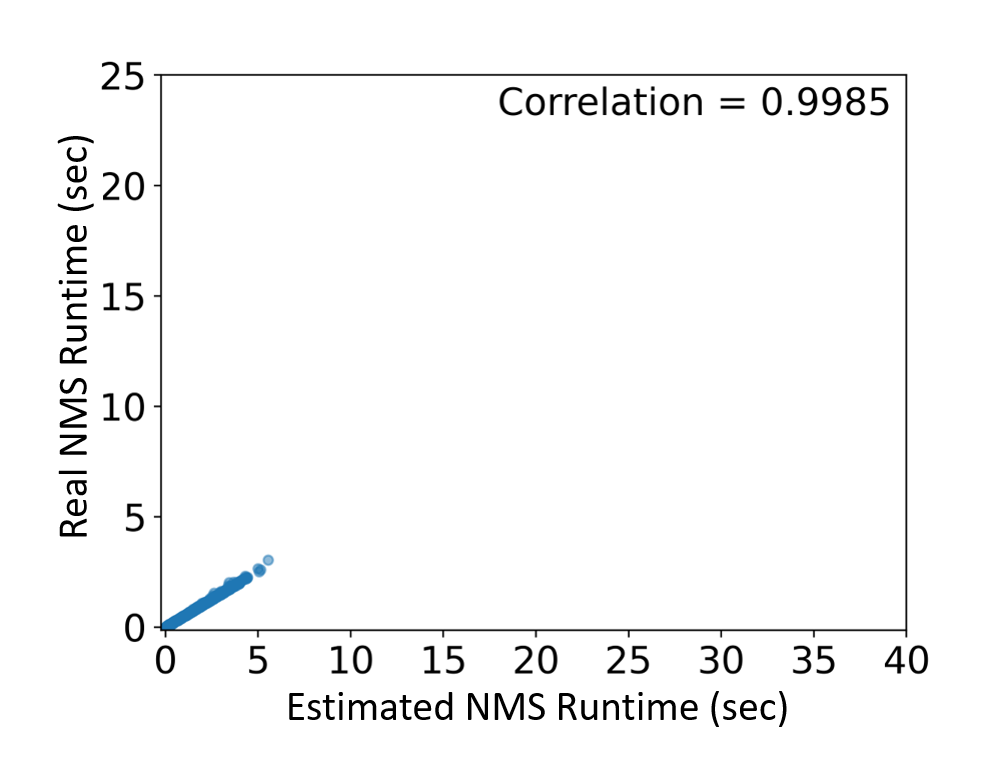}
    \caption{Real vs. estimated runtime (local).}
    \label{fig:estvsreallocal}
  \end{subfigure}
  \par 
  \begin{subfigure}[b]{\columnwidth}
    \includegraphics[width=\linewidth]{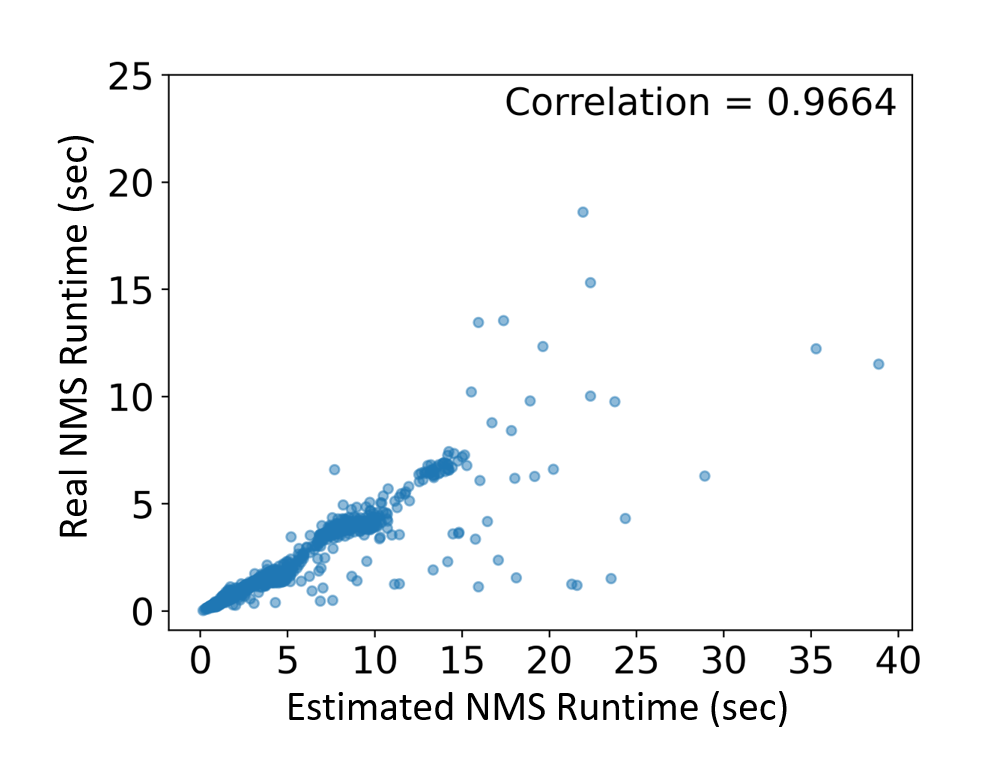}
    \caption{Real vs. estimated runtime (remote).}
    \label{fig:estvsrealremote}
  \end{subfigure}
\end{multicols}
\caption{Estimating the NMS runtime by modeling the neural component's runtime as a function of the image size, and subtracting it from the total runtime.}
\label{fig:bbs_nms_local_remote}
\end{figure}

\paragraph{Accurate remote measurement of leakage}
We repeat the experiment above, but this time in a VPN setup containing an HTTP client and server; both are implemented via Python scripts and run on an Intel core i7 CPU. 
The HTTP client repeatedly sends image POST requests to the HTTP server (which executes YOLOv3 on the given images) and measures the round-trip time (RTT) for each request (i.e., the time elapsed from the time the request was sent by the client until the server HTTP response arrives back to the client). 
Figs. \ref{fig:nruntimeremote} and \ref{fig:estvsrealremote} present the actual vs. estimated time, again showing close correspondence. 
Fig. \ref{fig:bbs_est_nms_remote} presents the correspondence between the remote attacker's estimated time and the number of bounding boxes; in this case, we also observe a high correlation, just slightly lower than those presented in Fig. \ref{fig:bbs_nms_local} where the measurements were obtained directly rather than estimated using a network connection.


\begin{figure}
\setlength{\columnsep}{0.5em} 
\begin{multicols}{2}
  \begin{subfigure}[b]{\columnwidth}
    \includegraphics[width=\linewidth]{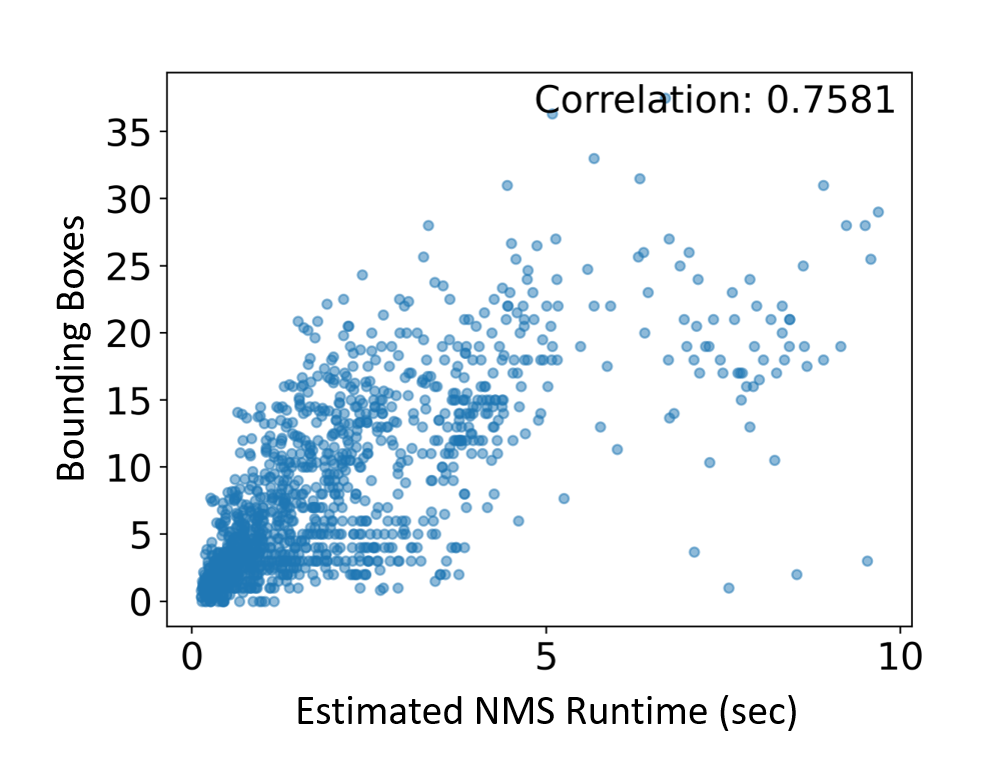}
    \caption{No amplification.}
    \label{fig:remotenoamp}
  \end{subfigure}
  \par 
  \begin{subfigure}[b]{\columnwidth}
    \includegraphics[width=\linewidth]{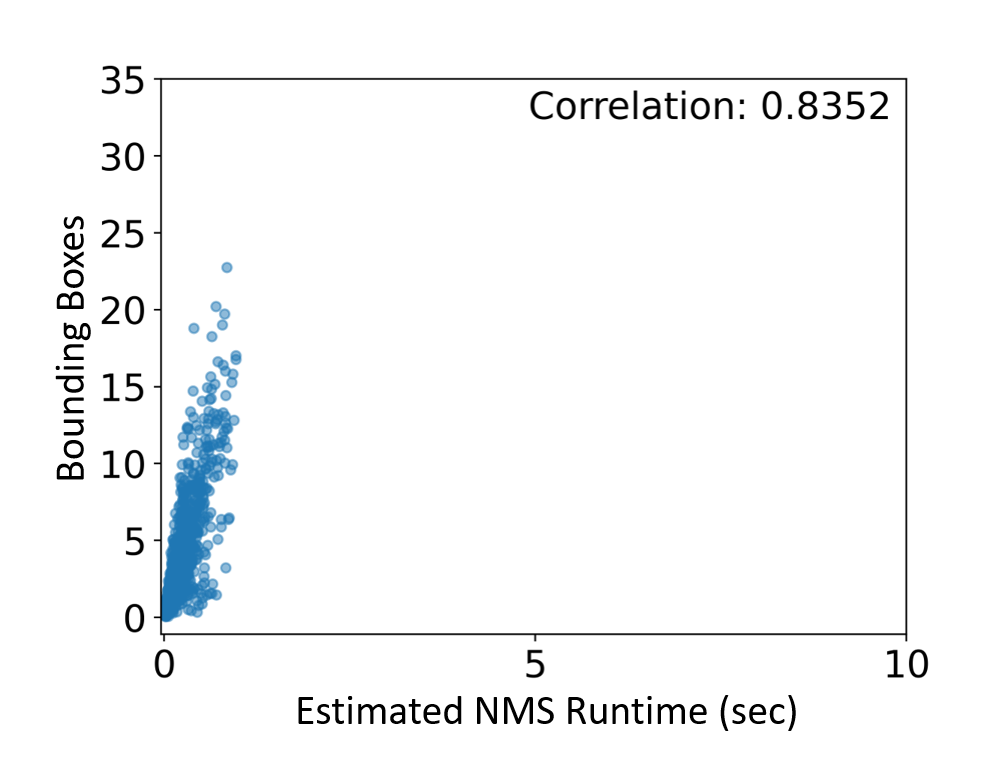}
    \caption{3x3 amplification.}
    \label{fig:remote33}
  \end{subfigure}
  \par 
\end{multicols}
\begin{multicols}{2}
  \begin{subfigure}[b]{\columnwidth}
    \includegraphics[width=\linewidth]{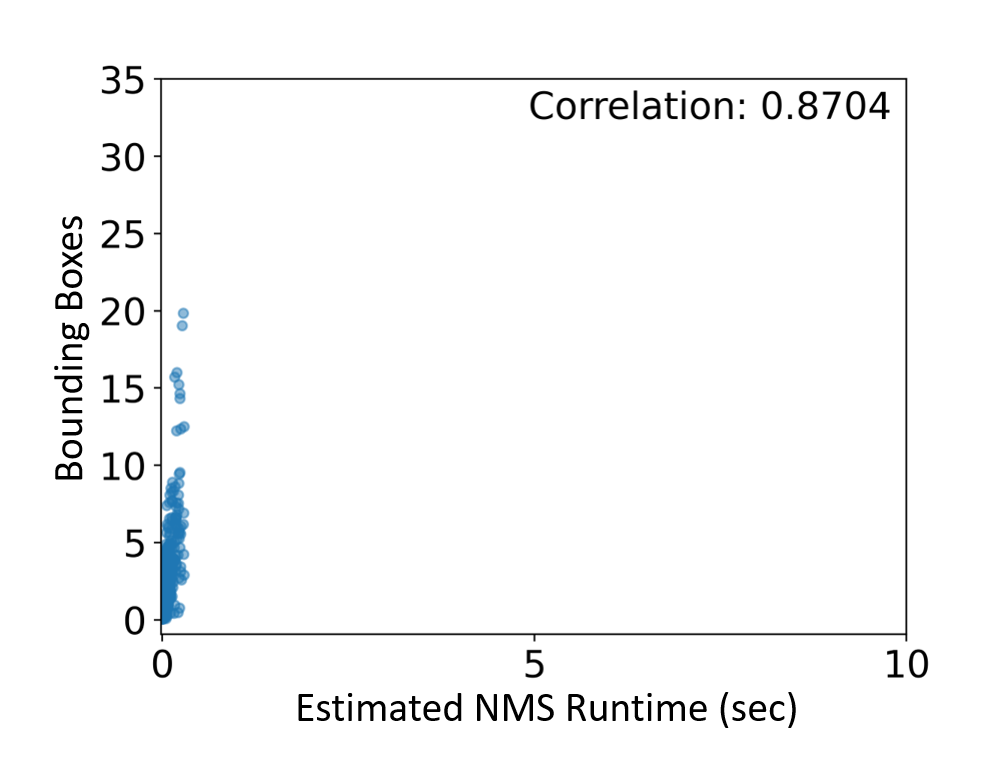}
    \caption{7x7 amplification.}
    \label{fig:remote99}
  \end{subfigure}
  \par 
\end{multicols}
\caption{\textbf{Estimated} NMS runtime vs. number of bounding boxes (\textbf{remote} querying).}
\label{fig:bbs_est_nms_remote}
\end{figure}
\section{Using Timing Leakage to Evade Detection}
\label{sec:evasion}

\begin{figure}
	\centering
	\includegraphics[width=0.5\textwidth]{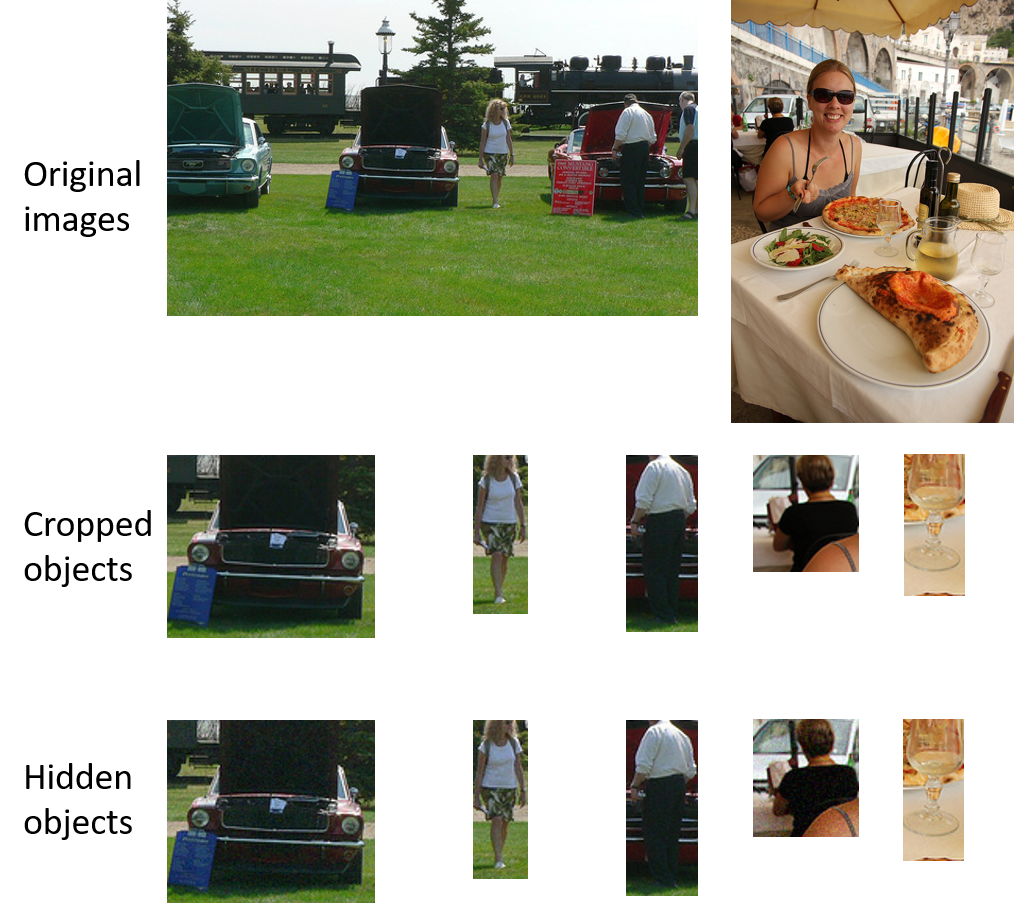}
	\caption{Examples of the timing-leakage-based evasion attack (original images, cropped objects, and hidden objects).}
	\label{fig:evasion-examples-1}
\end{figure}



\paragraph{Threat model}
Our adversary wishes to evade detection by applying adversarial perturbations on an image. We consider a black-box setting where the adversary can send an image to the object detector, receive a response, and measure the latency corresponding to the execution time of the inference procedure. In this scenario, the attacker has access to the output labels but does not have access to the detector's weights, architecture, or prediction confidence values.


\subsection{Timing-Leakage-Based Evasion Algorithm}

\paragraph{Exploring the use of timing as a proxy for confidence} Adversarial attacks that can access prediction confidence values usually try to gradually change them so that they will be inferior. 
For example, an evasion attack that accesses scores could work by iteratively trying to find a small perturbation that decreases the detector's confidence and applying it to the image, until no bounding boxes receive high enough confidence scores to be considered a detection. 
Our adversary has no access to the confidence values, so he/she uses timing as a proxy for confidence, leveraging the observation that they are connected, which was mentioned in Section~\ref{sec:leakage}.




\begin{algorithm}[h]
\caption{Timing-Leakage-Based Evasion Learning}
\label{alg:evade}
\begin{algorithmic}[1]
\State \textbf{Inputs:} 
\State \text{$gadget_0$ - the original cropped object}
\State \text{p - the size of the population}
\State \text{radius - controls population variance}
\State \text{$\lambda$ - determines the step size towards the new mutant} 
\State \textbf{Method:} 
\State $i \leftarrow 0$ 
\While {\text{not-detected?}($gadget_i$)}
	\State $amp_{i} \leftarrow amplify(gadget_{i})$
	\State $begin \leftarrow currentTimeInMills()$
	\State $result \leftarrow OD(amp_{i})$
	\State $end \leftarrow currentTimeInMills()$
	\State $execTime_{i} \leftarrow end-begin$
	\State \textbf{//** Creating a new population **//}
	\For{$j \leftarrow 0$ to p}
		\State \textbf{//** Draw a new instance in the population **//}
		\State $noise_{i,j} \leftarrow$ $ \text{draw uniformly from range [-1,1]}$
		\State $pert_{i,j} \leftarrow radius \times noise_{i,j}$
		\State $member_{i,j} \leftarrow pert_{i,j} + gadget_i$
		\State $amp_{i,j} \leftarrow amplify(member_{i,j})$
        \State $begin \leftarrow currentTimeInMills()$
        \State $result \leftarrow OD(amp_{i,j})$
        \State $end \leftarrow currentTimeInMills()$
        \State $execTime_{i,j} \leftarrow end-begin$
\EndFor	

\State \textbf{//** Calculating the fitness for each member**//}
\For{$j \leftarrow 0$ to p}
\State $fitness_{i,j} = \frac{|execTime_{i,j}-execTime_{i}|}{\sum_{m=1}^{p} |execTime_{i,m}-execTime_{i}|}$.
\State $direction_{i,j} \leftarrow sign (execTime_{i}-execTime_{i,j}) $
\EndFor	

\State \textbf{//** Calculating the mutation**//}
\State $mutation_{i} =\sum_{j=1}^{p} direction_{i,j} \times fitness_{i,j} \times pert_{i,j}$
\State $normMutation_{i} = \frac{mutation_{i}}{||mutation_{i}||_F}$

\State \textbf{//** Breeding:  calculating the mutation**//}
\State $gadget_{i+1} = gadget_{i} + (normMutation_{i} \times \lambda)$
\State $i \leftarrow i+1$ 
\EndWhile
\State \textbf{Output}: $gadget_i$
\end{algorithmic}
\end{algorithm}

 \begin{figure} 
	\centering
	\includegraphics[width=0.95\linewidth]{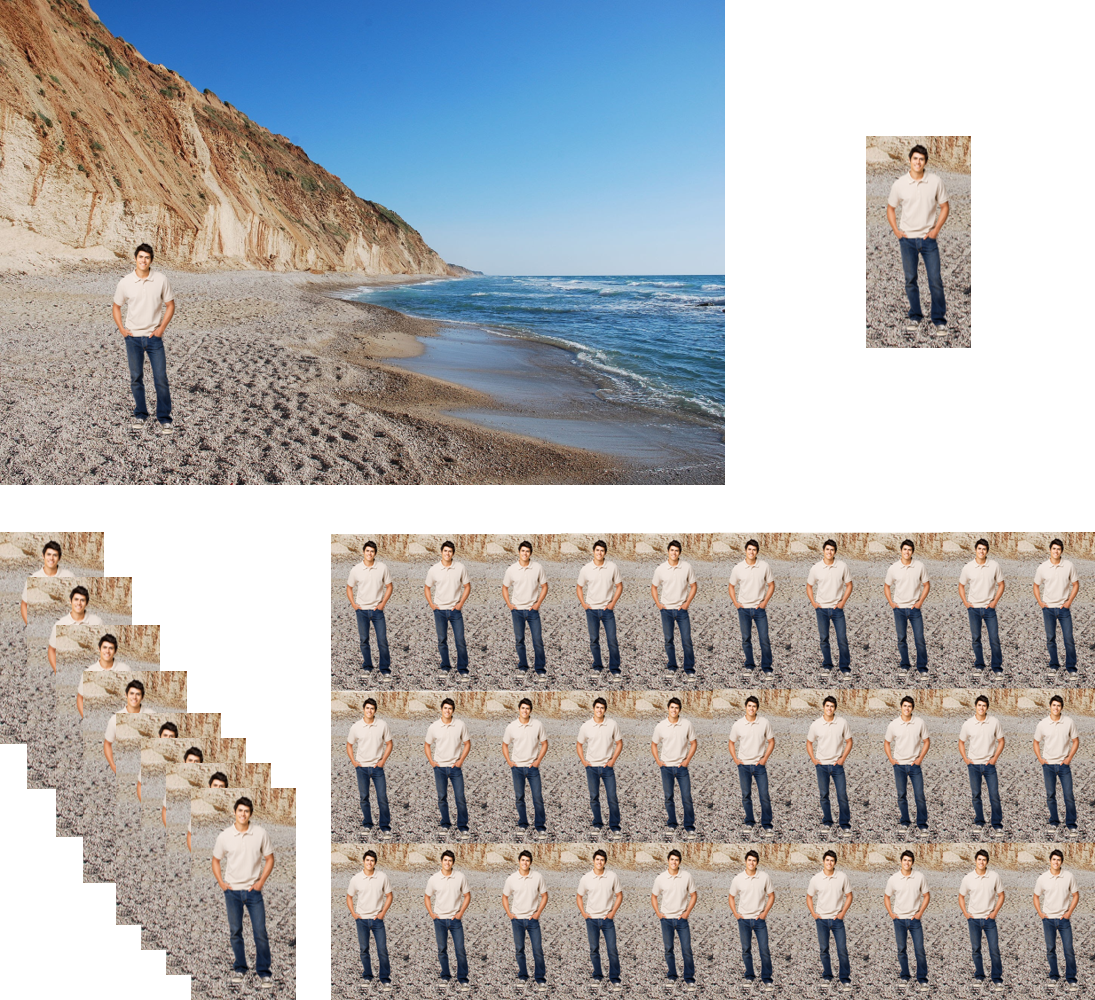}
	\caption{Timing-leakage-based evasion attack steps: Top left: the original image; top right: $gadget_0$ (the first gadget cropped from the image); bottom left: $member_{0,0}$,....,$members_{0,n}$ (the members are drawn, with the quantity depending on the population size.); bottom right: and $amplified_{0,0}$ (the amplified member created from the first member $member_{0,0}$).}
	\label{fig:method}
\end{figure}

\paragraph{Evolutionary algorithm}
Our attacker employs an iterative genetic algorithm, where in each iteration the algorithm draws instances (a population) near the object.
The instances are sent to the object detector for inference, and their execution times are used as fitness functions to approximate the quality of perturbations according to the principle that images with low execution times are more suggestive of evasion and are therefore more ``fit.''
The instances in the population are then \textit{bred} to create a new mutation. 
The steps of the attack are visualized in Fig. \ref{fig:method}.

Our attacker's input is the image of the object or the ``gadget,'' denoted as $gadget_0$. In each iteration $i$, the attacker performs the following steps: (1) Send $gadget_i$ to the object detector; return $gadget_i$ if it was \emph{not} detected. (2) Draw $n$ perturbations uniformly from a radius $radius$ around $gadget_i$ and apply them to the gadget to produce a new \textit{population}. (3) Send each population member to the detector and measure its runtime. (4) Calculate the member's \textit{fitness} using the measured runtime (see below). (5) Calculate the \textit{mutation} as the average of the member perturbations weighted by their fitness, multiplied by a learning-rate parameter $\lambda$. (6) Form $gadget_{i+1}$ by applying the perturbation mutation to $gadget_i$.

Whenever population members are sent to the detector, the attacker performs the leakage amplification described in Section~\ref{sec:leakage} on every member of the population by concatenating it multiple times. 
The fitness for population member $j$ in iteration $i$ is calculated as $fitness_{i,j} = \frac{execTime_{i}-execTime_{i,j}}{\sum_{m=1}^{n} |execTime_{i,m}-execTime_{i}|}$, where $execTime_{i}$ is the measured execution time of the gadget and $execTime_{i,j}$ is the measured runtime of population member $j$.
An implementation of this Algorithm is provided in Algorithm \ref{alg:evade}.

\subsection{Evaluation}

In this section, we evaluate the performance of the evasion attack.
The reader can use the following link\footref{fn:evasion-github} to download the code that implements the proposed method. We provide additional results related to the experiments done in this section in the appendix.

\paragraph{Experimental setup} The code was executed on a GPU cluster consisting of a few physical machines equipped with RTX 2080 Ti, six cores, and 32 GB RAM.
To adhere to the detector's size requirements, we resized the images to 416x416 pixels before sending them to YOLOv3.

The reader can assess the quality of the adversarial instances visually by looking at the original and hidden objects in Fig. \ref{fig:evasion-examples-1} and below quantitatively based on $L_2$ norm.
In the rest of this section, we refer to the $L_2$ norm as the evasion budget needed to hide an object from YOLO using the algorithm.

\begin{figure}
\centering
\includegraphics[width=0.38\textwidth]{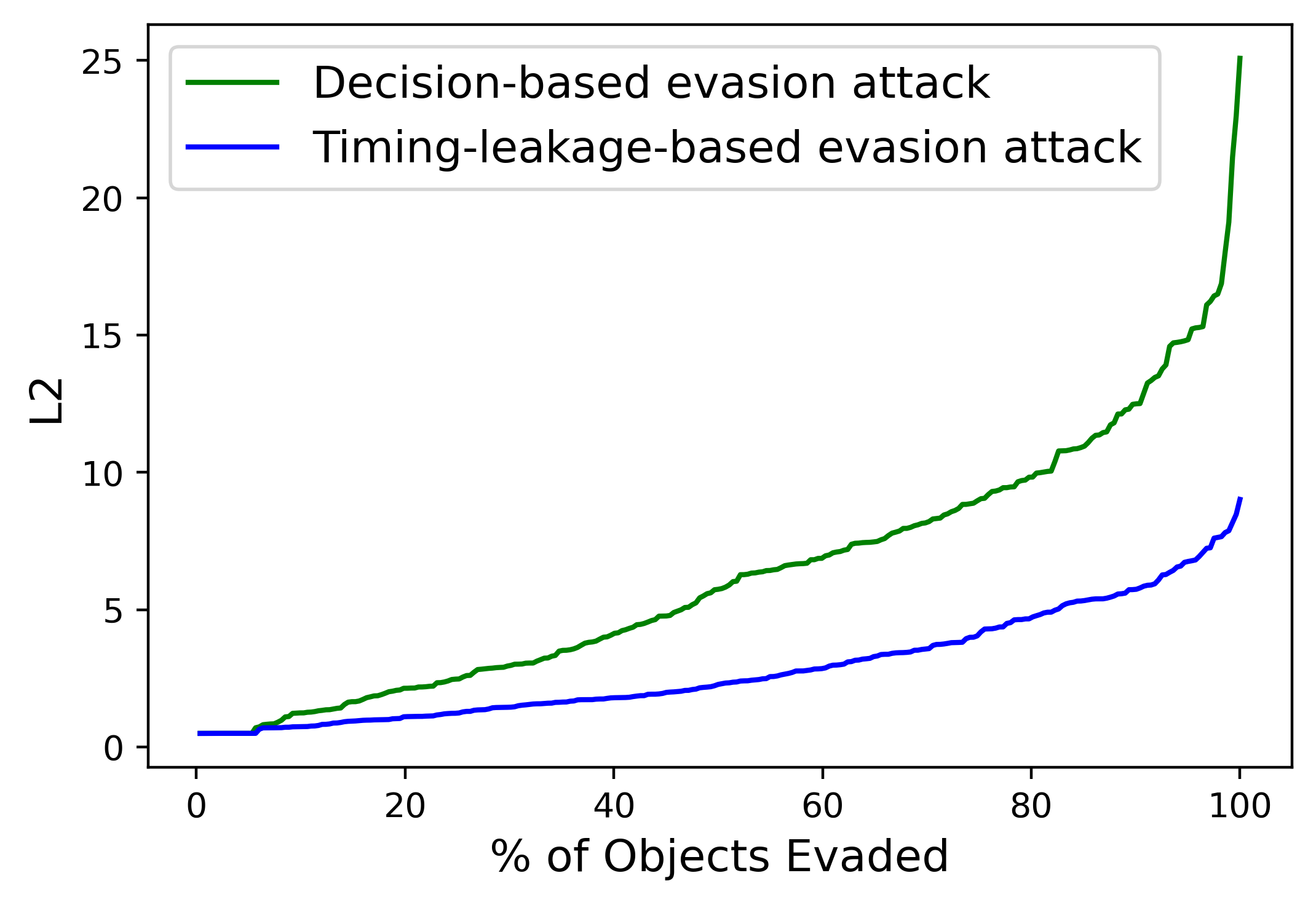}
\caption{The evasion budget ($L_2$ norm) required to evade a percentage of objects in the dataset when timing leakage was taken into account and when it was not considered (baseline). The results were calculated based on normalized RGB values (0-1).} \label{fig:eval-1}
\end{figure}

\paragraph{Comparison to a decision-based attack} We now compare the performance of the evasion algorithm to a modified version of the algorithm which does not take the timing leakage into consideration.
The following values were used in our code: $p = 20$ (the size of the population drawn in each iteration), $\lambda= 0.5$ (determines the size of the step towards the new mutant), and $radius = 25.0$ (the difference between the gadget and the drawn instances of the population).
The modified decision-based evasion attack bases its decision solely on the output of the object detector. 
To produce the modified version, we changed the fitness calculation to set the fitness of each population member to $\frac{1}{n}$ if the object was not detected or $-\frac{1}{n}$ if it was detected and omitted the leakage amplification (since the baseline does not use leakage). 
The remaining values ($\lambda$, $radius$, and $p$) are equal in both versions.
We randomly selected 280 images from the COCO dataset for this experiment.

We performed the timing-leakage-based and decision-based evasion attacks and computed the $L_2$ norm perturbation produced by each of the attacks.
The results are presented in Fig. \ref{fig:eval-1}, where it can be seen that the perturbation size is significantly smaller for the timing-based evasion attack than it is for the decision-based evasion attack, demonstrating the former's superiority over the latter.


\begin{figure}
\centering
\includegraphics[width=0.38\textwidth]{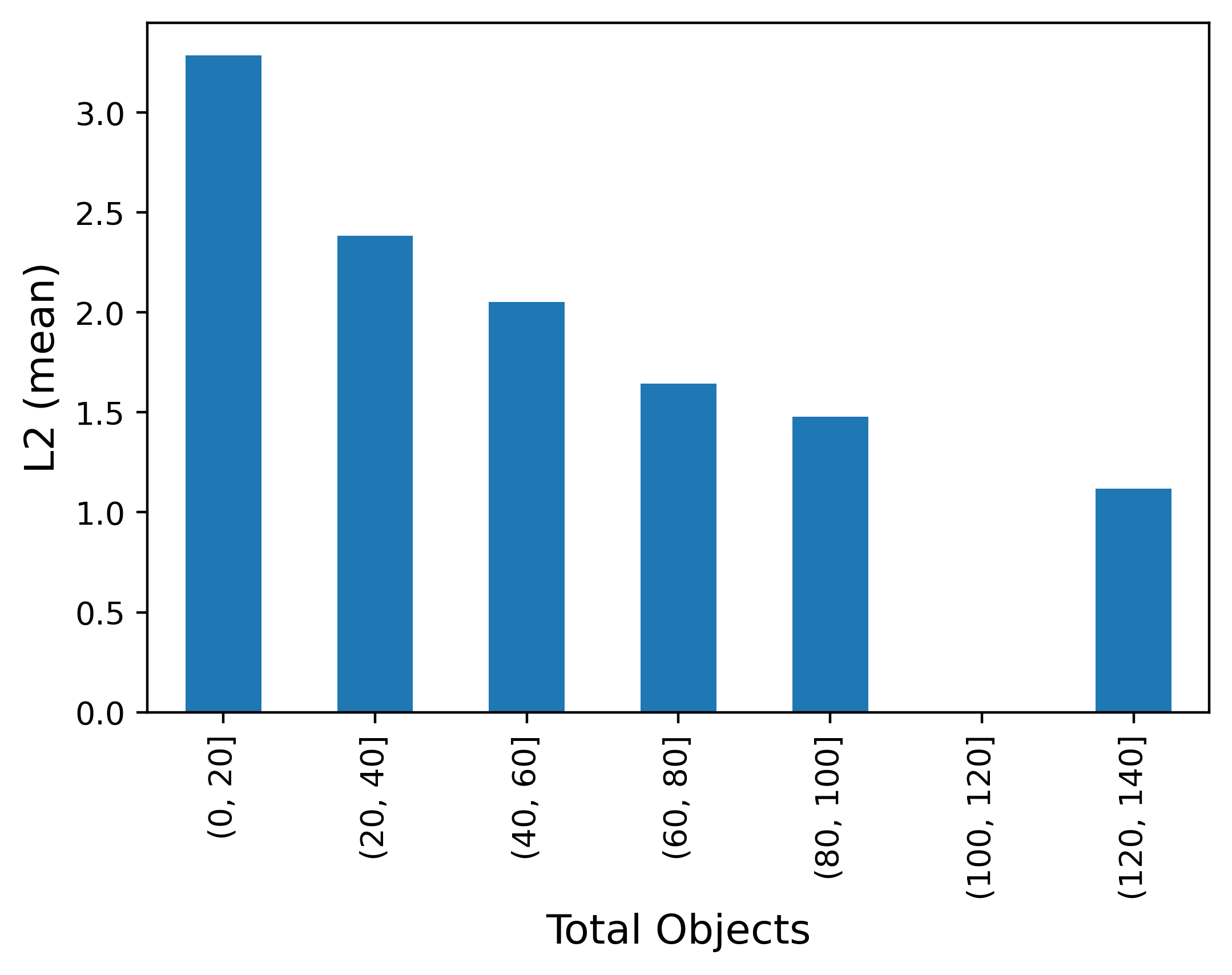}
\caption{The effect of the number of objects on the initial amplification: the average evasion budget ($L_2$ norm) needed to cause an object to evade an object detector as a function of the number of objects detected with $amplification_{0}$. The results were calculated based on normalized RGB values (0-1).} \label{fig:eval-2}
\end{figure}

\paragraph{The effect of the number of objects initially detected in the amplified image} 
Some objects that appear in the original gadget do not appear in all of its concatenated copies in the leakage amplified version (likely due to the resize operation, which introduces a resolution difference between them). 
We now examine the implications of this effect on our attack.

For each of the 280 images in this experiment, we measured the number of objects detected in its leakage amplified version (prior to any perturbation). 
Fig. \ref{fig:eval-2} presents the correspondence between this number and the attack's success in terms of the perturbation $L_2$ norm. 
The results show that the more ``successful'' the amplification procedure is in creating detected copies of the object (and thus in amplifying the leakage), the better our attack performs, further confirming the effect of leakage on performance and indicating that the attack can be improved by amplifying it further.

\begin{figure}
\centering
\includegraphics[width=0.38\textwidth]{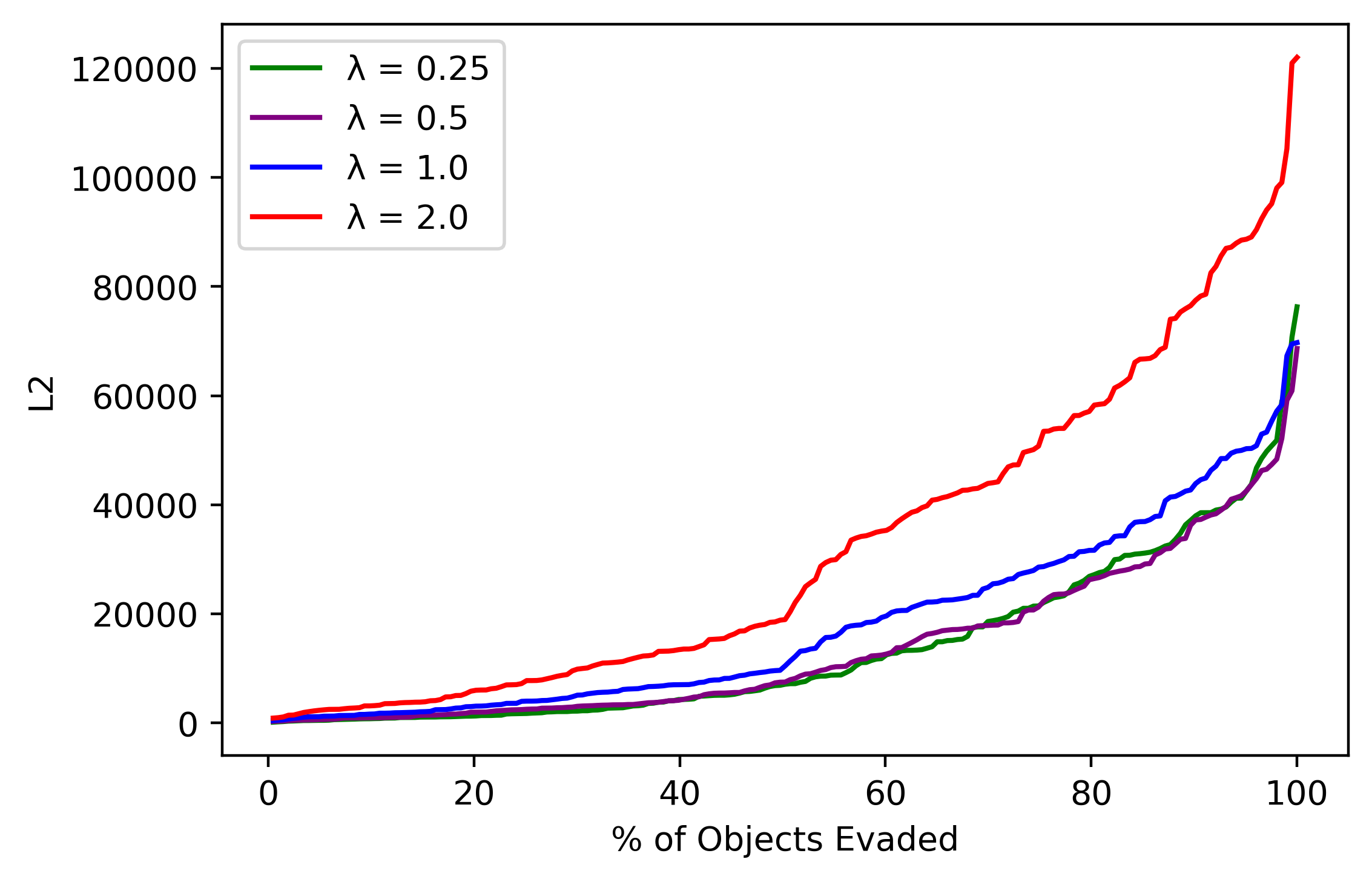}
\caption{The average evasion budget ($L_2$ norm) needed to evade a percentage of objects in the dataset with varying values of the $\lambda$ parameter. The results were calculated based on RGB values (0-255). The lower the value of $\lambda$, the smaller the evasion budget.} \label{fig:eval-budget}
\end{figure}

\paragraph{The effect of the $\lambda$ value (which determines the size of the step towards the new mutant) on the evasion budget}
We now evaluate the performance of the evasion attack when it is used with different $\lambda$ values.
We randomly selected 100 images from the COCO dataset for this set of experiments.
We ran our code in four experiments, and in each experiment, we changed the value of $\lambda$ (0.25, 0.5, 1.0, 2.0).
The rest of the parameters were fixed as follows in the experiments.
We used $p = 20$ and $radius = 25.0$. 
We computed the mean $L_2$ norm on the original image and the hidden image for each of the objects and computed the mean $L_2$ norm for each $\lambda$ value.
The results are presented in Fig. \ref{fig:eval-budget}.
Unsurprisingly, the evasion budget decreases when the value of $\lambda$ decreases.




\section{Dataset Inference Using Timing Leakage}
\label{sec:membership-inference}

\newcommand{\norm}[1]{\left\lVert#1\right\rVert}
\newcommand{\brac}[1]{\left(#1\right)}
\newcommand{\curbrac}[1]{\left\{#1\right\}}
\newcommand{\sqbrac}[1]{\left[#1\right]}
\newcommand{\abrac}[1]{\langle#1\rangle}
\newcommand{\size}[1]{\left|#1\right|}

\newcommand{\memberset}{\mathcal{D}_m}
\newcommand{\nonmemberset}{\mathcal{D}_{\widetilde{m}}}
\newcommand{\targetset}{\mathcal{T}}
\newcommand{\memberavg}{\hat{\mu}_m}
\newcommand{\nonmemberavg}{\hat{\mu}_{\widetilde{m}}}
\newcommand{\targetavg}{\hat{\mu}_{\mathcal{T}}}
\newcommand{\membermean}{\mu_{m}}
\newcommand{\nonmembermean}{\mu_{\widetilde{m}}}

\paragraph{Threat model}
Given a \textit{target set} of examples and black-box query access to an object detector, our attacker tries to determine whether or not the entire set of examples was part of the detector's training set. 
Our attacker does not necessarily aim to identify the members of a single image, but rather the inclusion of a specific set or a source of images (for example, the set of images including a specific individual). 
Conservatively, we assume that the set examples are drawn independently and identically distributed from the same dataset used to produce the attacker's training data (otherwise, we can expect the attacks to be easier, as the attacker's signal could potentially reveal instance membership as well as a distributional similarity to the training data).

We assume the adversary already has two sets of labeled data points with samples known to be members of the victim's training set and with known nonmembers.\footnote{This is often a required assumption for dataset inference; a large amount of research has shown that the attacker can somewhat mitigate the need for this by profiling membership vs nonmembership behavior on shadow models constructed in an offline phase
However, this imposes another requirement, since it assumes that the attacker can draw examples from a distribution similar to the victim model's training set. See~\cite{shokri2017membership} and follow-up studies in Section~\ref{sec:related-work}.}

\paragraph{Attack method}
First, the attacker characterizes the runtime behavior of the model. This can be done once for use in inferring membership on multiple target sets. This includes the following steps: (1) use the method described in Section~\ref{sec:leakage} to learn how the neural component's runtime correlates to the image size; this allows the attacker to approximate the NMS algorithm's runtime by subtracting the neural component's runtime from the total runtime; and (2) query the model and record the estimated NMS runtime. Prior to querying the model, our attacker first amplifies the expected timing leakage by tiling or concatenating each image multiple times (see Section~\ref{sec:leakage}).

Then, given a target set, the attacker queries the model with every instance in the set, again after amplification, and records the runtime.

The attacker now possesses a sample of member runtimes, a sample of nonmember runtimes, and a sample of target set runtimes, which are denoted respectively as $\memberset, \nonmemberset$, and $\targetset$. The attacker approximates the expected proportion of samples with a runtime \(\geq75\) seconds that $\memberset$ and $\nonmemberset$ are sampled from by calculating the average, denoted as $\memberavg$ and $\nonmemberavg$. Next, the attacker examines whether the average $\targetavg$ is closer to $\memberavg$ or $\nonmemberavg$; if the average $\targetavg$ is closer to $\memberavg$, the attacker determines that the target set contains training set members, and if it is closer to $\nonmemberavg$, it contains nonmembers.

\paragraph{Experimental setup}
We extracted 2,000 RGB images containing various objects (i.e., people, animals, and vehicles) from seven YouTube video recordings of street views taken by people traveling in different physical locations (i.e., cities): New York (NY, USA), San Francisco (CA, USA), Dubai (United Arab Emirates), Miami (FL, USA), London (UK), Los Angeles (CA, USA), and George Town (Singapore). Due to dynamic movements, for each object, the image was taken at a slightly different angle with respect to the object and a different distance from the object.


We trained YOLO to detect ``person,'' the most common label in the COCO dataset, on 2,000 randomly selected images in the COCO dataset's training set that contains a person. We used the default hyperparameters from the official YOLO code~\cite{darknetyolo}, such as a batch size of 64 and training for 6,000 iterations. Early stopping was used to choose the weights with the highest mAP (mean average precision) on the validation set (500 images). An mAP of 48.70\% was achieved, which is not far from YOLO's reported 55.3\%.



We executed YOLO on a machine with a Titan Xp GPU and we performed inference on 2,000 randomly chosen images that were used for training and 2,000 randomly chosen images from the COCO dataset's training set that were not used for training, and timed each run using Python's \texttt{time} module. Before performing inference on an image,  we amplified the image by tiling it in a 5x5 pattern. We also resized the image height and width to the closest multiple of 32, to comply with YOLO's input size requirements.
As in Section~\ref{sec:leakage}, we timed YOLO's run on all-black images of varying sizes to learn how the neural component's runtime corresponds to the various image sizes and subtracted the estimated neural runtime from the total runtime, producing estimates for the instances' NMS runtimes.

\paragraph{Calculating the attacker's likelihood (i.e., probability) of success}
Our attack only needs to be performed once but to empirically measure the attacker's likelihood (i.e., probability) of success, we would have to simulate the attack many times, each time sending multiple queries. A far more scalable way to understand the risk posed by this attack is to estimate the distribution of the attacker's timing samples and reason analytically about their success.

The previous step resulted in a set of runtimes for a training set of members and nonmembers.
We computed a histogram of values for each set which shows that there is a notable drop in the frequency of samples with an estimated NMS runtime over 75 seconds. Using this value as a benchmark, we define the following: 

\[X_{i} =
    \begin{cases}
      0 & \text{if NMS runtime < 75 seconds}\\
      1 & \text{if NMS runtime $\geq$ 75 seconds}
    \end{cases} \]

The sum of these indicator variables is defined as \(X = \sum_{i=1}^{n}X_{i} \). We use the member and nonmember runtime data to determine the expected value \(\mathbb{E}(X)\) and use this variable to calculate bounds on the attacker's false positive rate. We outline our method here and provide the full details in the appendix.

Let the mean of \(X\) from above be $\mu_{m}$ for the member set and $\mu_{\bar{m}}$ for the nonmember set. Let $h=\size{\mu_{m}-\mu_{\bar{m}}}/4$. Let $\memberavg$ be the average of $\memberset$ and $\nonmemberavg$ be the average of $\nonmemberset$. Assume that the attacker's target set $\targetset$ contains only nonmembers of the training set, and let $\targetavg$ be its average. To bound the attacker's false positive rate, we use Chernoff's bound to upper bound the probabilities for the following events: (1) $\size{\memberavg-\membermean}>h$, (2) $\size{\nonmemberavg-\nonmembermean}>h$, and (3) $\size{\targetavg-\nonmembermean}>h$. We then apply a union bound on the three events. Based on this, we observe that if all three events do not hold, then by triangle inequality we know that $\size{\targetavg-\membermean}<\size{\targetavg-\nonmembermean}$ and our attacker will succeed. Therefore, we can bound the attacker's false negative rate using a symmetrical argument.

{}


\begin{figure}
\captionsetup{aboveskip=0pt}
\setlength{\columnsep}{1em} 
\begin{multicols}{2}
  \begin{subfigure}[b]{\columnwidth}
    \includegraphics[width=\linewidth, height=4cm]{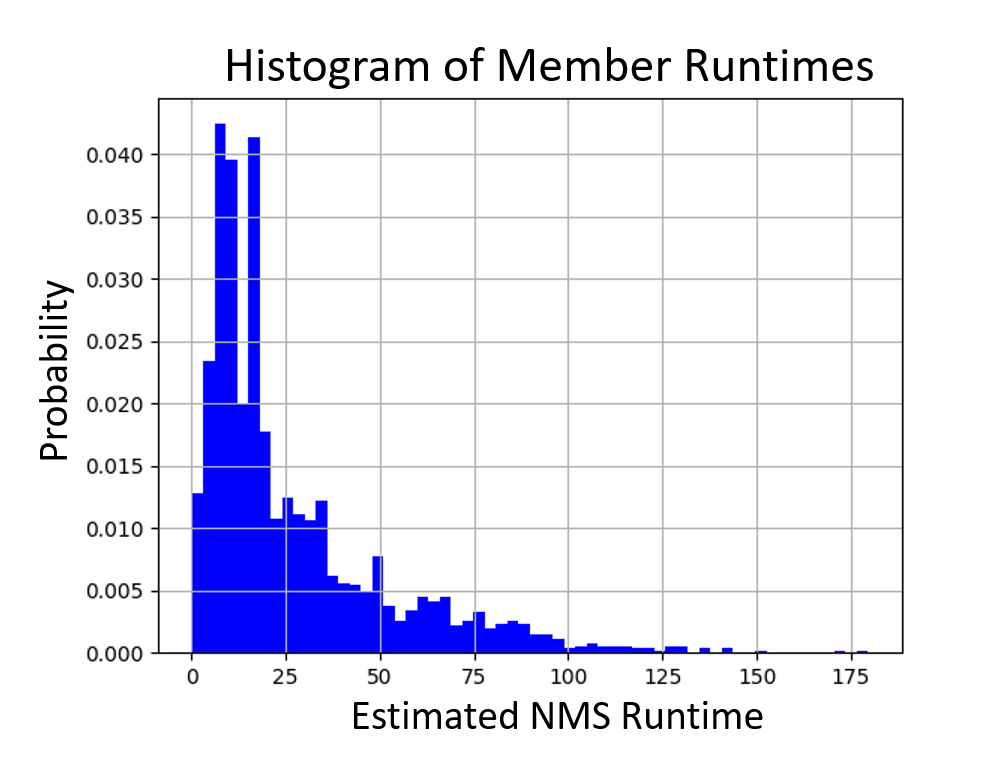}
    \label{subfig:member-histogram}
  \end{subfigure}
  \par 

  \begin{subfigure}[b]{\columnwidth}
    \includegraphics[width=\linewidth, height=4cm]{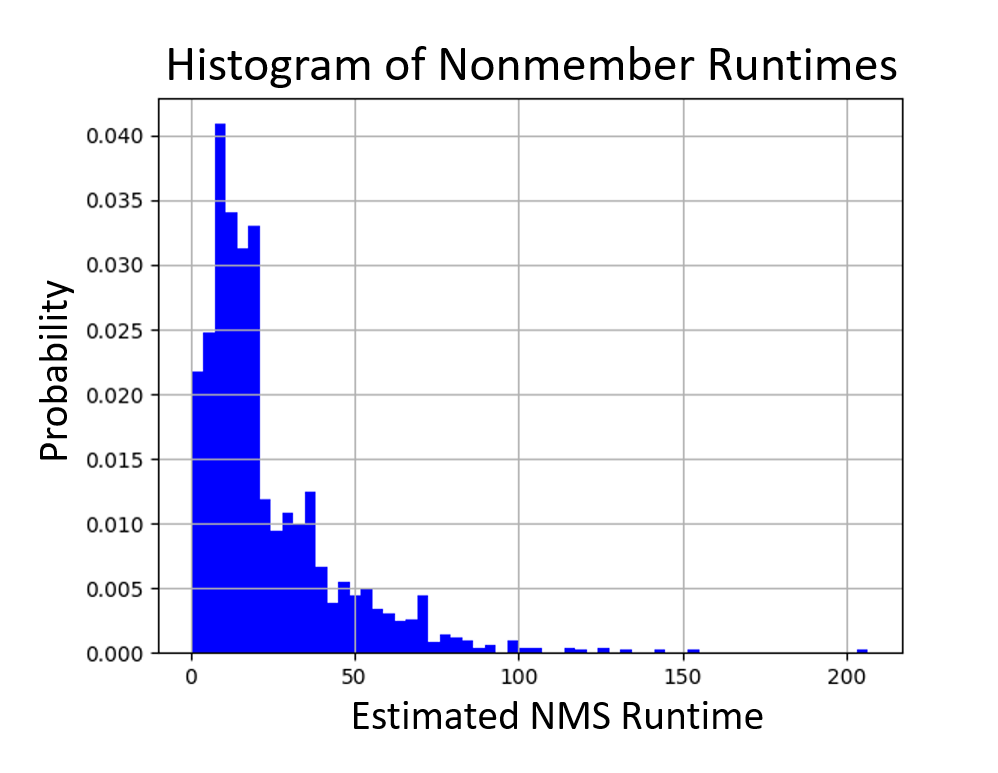}
    \label{subfig:nonmember-histogram}
  \end{subfigure}
  \par 

\end{multicols}
\caption{Timing histograms for member (left) and nonmember (right) samples.}
\label{fig:membership-details}
\end{figure}

\begin{figure}
\centering
\includegraphics[width=0.38\textwidth]{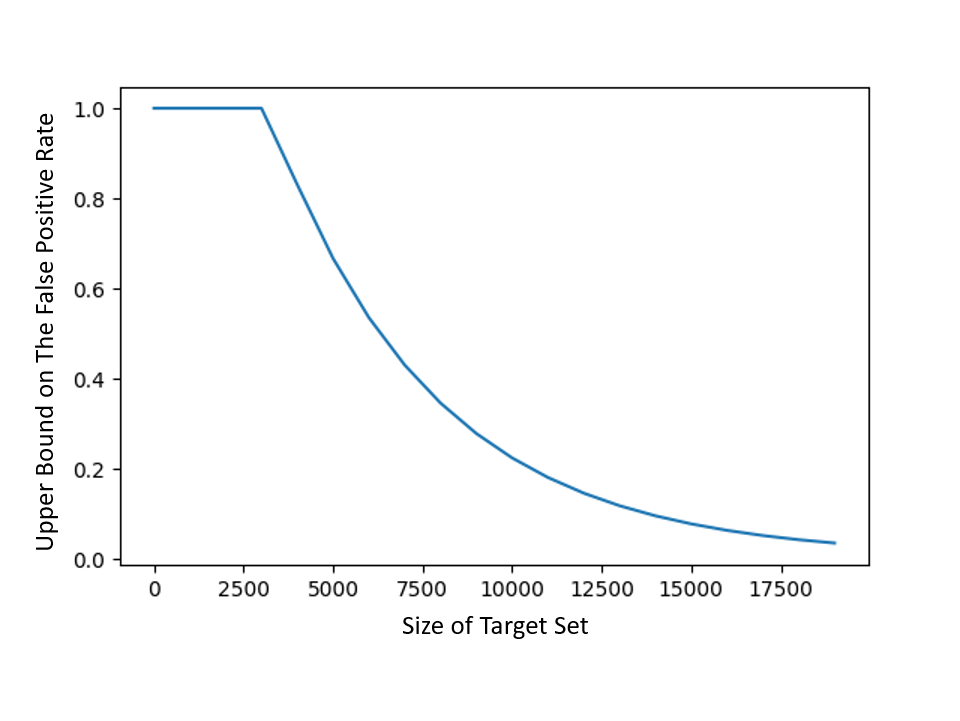}
\caption{Upper bound on the false-positive rate as a function of the attacker's sample size $\size{\targetset}$.} \label{fig:success}
\end{figure}

\paragraph{Results and analysis}
YOLO detected the objects in about 75\% of the images.
Our distributions have means of 6.8\% and 2.9\% for images with runtimes \(\geq75\) seconds, respectively for the member and nonmember sets. Fig. \ref{fig:membership-details} presents the timing histograms for members and nonmembers. These means are multiplied by the assumed member and nonmember set sizes to determine \(\mathbf{E}[X]\). Fig. \ref{fig:success} shows the bound on our false positive rate, which exponentially decreases with the size of the attacker's target set.

\section{Related Work}
\label{sec:related-work}

\paragraph{Adversarial inputs and dataset inference}
Adversarial inputs such as the ones we construct to evade detection are an extensively studied subject, and many such attacks have been proposed~\cite{Carlini-Wagner, FGSM, BIM, ilyas2018black, ilyas2018prior, chen2017zoo, chen2020hopskipjumpattack, brendel2017decision}; most pertinent to our work are black-box attacks, and specifically ``decision-only'' attacks~\cite{chen2020hopskipjumpattack, brendel2017decision} that only have the capability of observing model decision outputs. Attacks against object detectors have also been demonstrated~\cite{nassi2020phantom, song2018physical, liu2018dpatch, zhao2019seeing, wang2021daedalus}, including Daedalus~\cite{wang2021daedalus} which specifically targets the NMS component of object detectors, fooling it to produce many false detections. Prior black-box attacks against object detectors are transfer-based~\cite{wang2021daedalus} and have a disadvantage in that they can be effectively mitigated using ensemble adversarial training~\cite{tramer2017ensemble}. Our work is the first to mount a decision-based attack on an object detector, and we improve it using side-channel leakage.



Shokri et al.~\cite{7958568}, whose work was followed by a long line of studies~\cite{song2019membership,choquette2021label,shafran2021membership}, exploited the target model's score to infer training set membership. We are the first to show dataset inference in a label-less setting where the model's decision is not needed.


\paragraph{Side-channel attacks}
Side-channel attacks exploit unintended and externally measurable side effects of information processing, such as program runtime or shared-resource contention, to extract sensitive information such as cryptographic keys \cite{kocher1996timing,brumley2005remote} or website-visit identity~\cite{panchenko2016website,schuster2017beauty}.
Prior side-channel attacks on neural networks primarily focused on extracting network architecture and weights from various side channels including cache, hardware-component timing, or electromagnetic emanations \cite{duddu2018stealing,batina2019csi,yoshida2019model, hua2018reverse}. Sun et al.~\cite{sun2020anonymizing} fingerprinted inputs via a cache attack. Nakai et al.~\cite{nakai2021timing} used timing leakage from embedded microcontrollers to guide the search for adversarial perturbations, however the proposed attack is only applicable to specific hardware and requires physical access to the embedded device.

\textit{Algorithmic runtime side channels.} An especially powerful class of side channels, algorithmic runtime side channels leak sensitive information through the variable time of algorithmic components. They are characterized by strong leakage signals, allowing attackers to exploit them remotely using noisy measurements with very limited knowledge of the underlying implementations~\cite{brumley2005remote, schwarzl2021practical, gluck2013breach}. We are the first to use timing as a side channel to target the integrity and privacy of machine learning.




\section{Countermeasures}
\label{sec:countermeasures}

Countermeasures should address the trade-off between usability and security, decoupling the runtime of inference from sensitive information, by running in constant time (necessarily worst case). 
For example, while incorporating random delays might be a viable strategy against limited attacker models, it does not always prevent leakage~\cite{van2014encyclopedia} and ultimately faces a similar limitation. 
To effectively conceal secret-dependent leakage, the delays would generally need to be on the order of magnitude of the runtime variance caused by the leakage itself, which can be considerable.

In the case of the NMS algorithm, we note that striking this balance is highly nontrivial. 
Greedy NMS can, in the worst case, run for many minutes or even hours on common setups, as demonstrated in this paper. 
Employing a constant-time greedy NMS approach would result in even slower processing times, rendering it impractical for real-time systems. 
Thus, the development of countermeasures must carefully navigate the trade-off between usability and security, aiming to effectively mitigate runtime leakage while maintaining practicality in real-world applications.

As for more efficient approaches than greedy NMS, Neubeck et al.~\cite{neubeck2006efficient} studied NMS variants and observed a trade-off between efficiency and ease of implementation. 
The variants studied are not constant-time, but they would potentially allow mitigation via delays. 
Libraries and frameworks that implement efficient and parallelizable versions of the NMS algorithm would be a significant step forward, as they would allow practitioners to avoid the highly-leaky greedy variant.

Another potential avenue is neural or learned approaches for NMS~\cite{hosang2017learning} or end-to-end object detectors that do not use NMS at all. 
Recently, attention-based end-to-end approaches achieved performance that is on par with NMS-based ones~\cite{carion2020end,zhou2019objects,sun2021sparse}. 
Our attack highlights the substantial advantages, in terms of security and privacy, of these approaches.
\section{Limitations}
\label{sec:limitations}

Our concrete attack models of detection evasion and dataset inference make some conventional assumptions, such as the ability to query the detector multiple times (for evasion) and knowledge of portions of the training data (for dataset inference). These assumptions are described and justified earlier in the paper.

It is also important to note the specificity of the attacks presented in this paper, which are only applicable to object detectors that incorporate the NMS algorithm. In Section \ref{sec:countermeasures}, we emphasize the significance of exploring alternative approaches to NMS in object detectors. Additionally, in Section \ref{sec:discussion}, we discuss potential avenues for future research that involve extending these attacks to encompass a broader range of architectures.
\section{Discussion \& Future Work}
\label{sec:discussion}

\paragraph{Conclusion and broader impact} This study demonstrates the potential advantage attackers can gain from variable-time inference algorithms, especially those that involve non-neural components if they are not implemented and deployed with caution. 
The observed leakage signal-to-noise ratio in our case study on the NMS algorithm is remarkably strong and easily measurable, even over an Internet connection. 
Furthermore, the unique characteristics of object detection operations allow attackers to significantly amplify their signal-to-noise ratio. 
We show that this leakage can be exploited in attacks that outperform the decision-only baseline. 
These findings highlight the power of algorithmic timing side channels, which we are the first to study and utilize for attacks on machine learning, and we raise awareness of the risks posed to machine learning models and the need for constant-time mitigation.

\paragraph{Future work}  Future work should expand the scope to include hybrid architectures like multi-exit networks or generative language models.
In our future investigations, we also plan to delve further into the possibilities of leveraging temporal information leakage in the NMS algorithm. This exploration will lay the foundation for developing attacks that are solely based on timing aspects. It is important to emphasize that these extensions of the attack would be applicable in a threat model in which the prediction and bounding box returned from inference are not available or present. Our plan to focus on this aspect reflects the significance of studying the timing side channel in scenarios where access to certain outputs may be restricted. 
These extensions would involve enhancing the attack presented in Section \ref{sec:evasion} to solely exploit timing vulnerabilities.

In future research, we plan to explore analogous attacks exploiting NMS leakage (Section \ref{sec:leakage}), as well as attacks described in Sections \ref{sec:evasion} and \ref{sec:membership-inference} on different hardware components. We will also investigate the influence of network traffic on these attacks.

\section*{Acknowledgements}

This work was partially supported by the Cyber Security Research Center at Ben-Gurion University of the Negev, the Jacobs Urban Tech Hub at Cornell Tech, and the Technion's Viterbi Fellowship for Nurturing Future Faculty Members.
We would like to acknowledge our sponsors, who support our research with financial and in-kind contributions: Amazon, Apple, CIFAR through the Canada CIFAR AI Chair, DARPA through the GARD project, Intel, Meta, and the Sloan Foundation. Resources used in preparing this research were provided, in part, by the Province of Ontario, the Government of Canada through CIFAR, and companies sponsoring the Vector Institute. We would also like to thank CleverHans lab group members for their feedback.

\bibliographystyle{IEEEtran}
\bibliography{IEEEabrv,main}
\section{Appendix}
\label{sec:appendix}

\subsection{Evasion Attack Supplementary Material} 
The material below supplements the information presented in Section~\ref{sec:evasion}.


\begin{figure}[h]
  \begin{subfigure}[b]{1.0\columnwidth}
    \includegraphics[width=\linewidth]{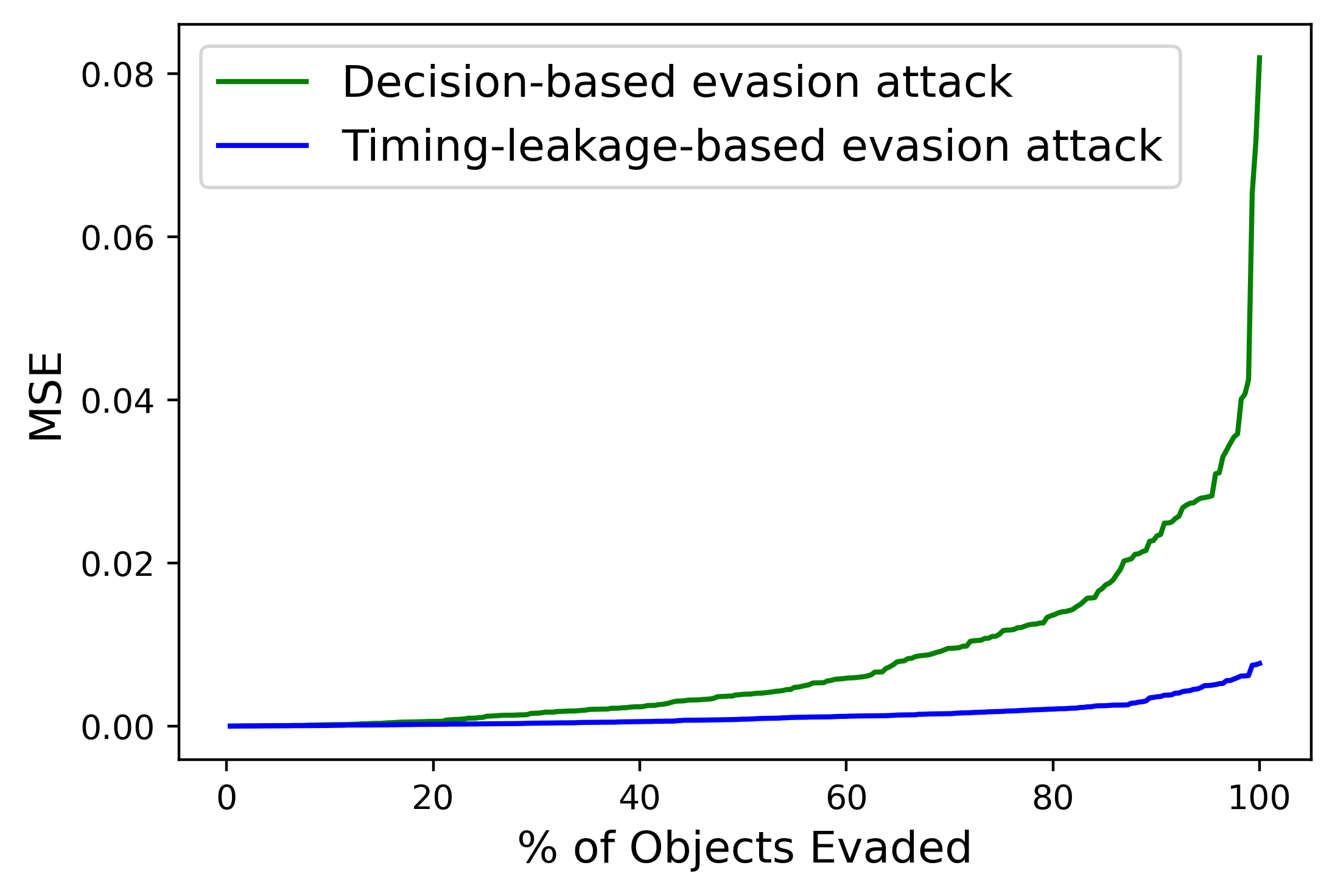}
  \end{subfigure}
  \hfill 
  \begin{subfigure}[b]{1.0\columnwidth}
    \includegraphics[width=\linewidth]{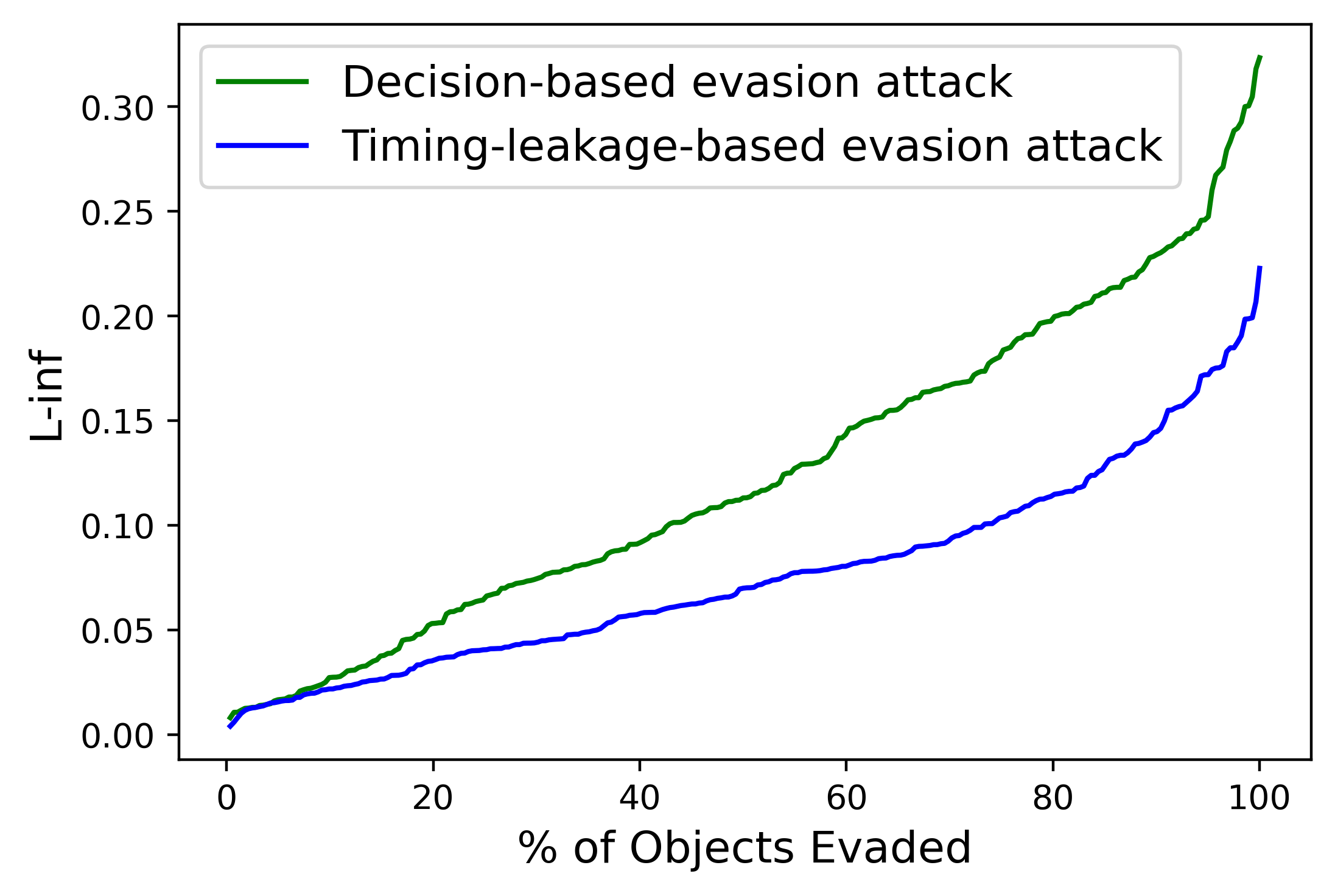}
  \end{subfigure}
\caption{Evasion budget for timing and baseline attacks: evading a percentage of objects in the dataset with and without considering timing leakage (normalized RGB values 0-1).}
\label{fig:baseline}
\end{figure}

Fig. \ref{fig:baseline} presents the results for additional metrics (MSE and $L_{\infty}$) used to compare the performance of our timing-leakage-based evasion attack and decision-based evasion attack (described in Section~\ref{sec:evasion}).


\begin{figure}[h]
  \begin{subfigure}[b]{1.0\columnwidth}
    \includegraphics[width=\linewidth]{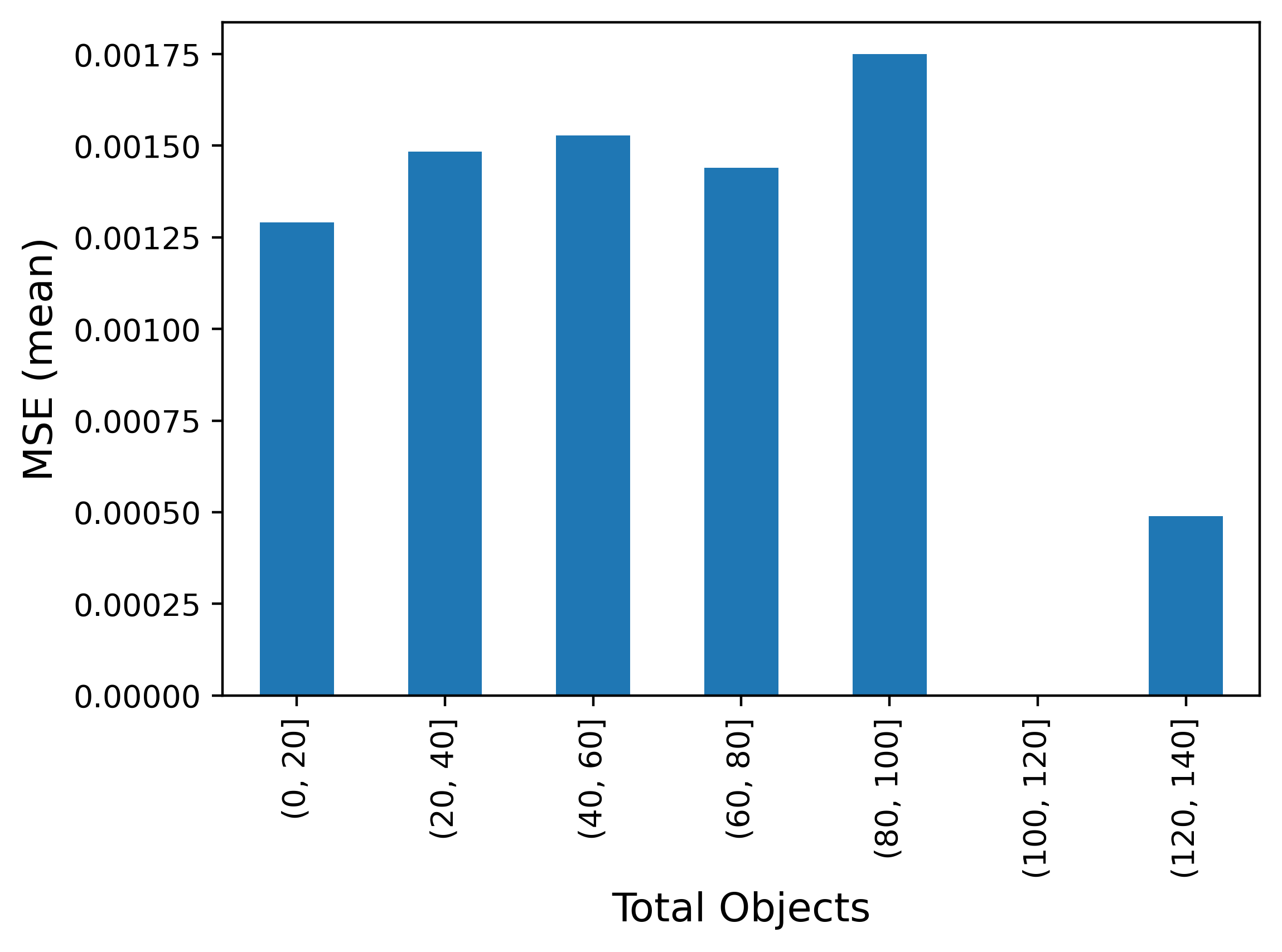}
  \end{subfigure}
  \hfill 
  \begin{subfigure}[b]{1.0\columnwidth}
    \includegraphics[width=\linewidth]{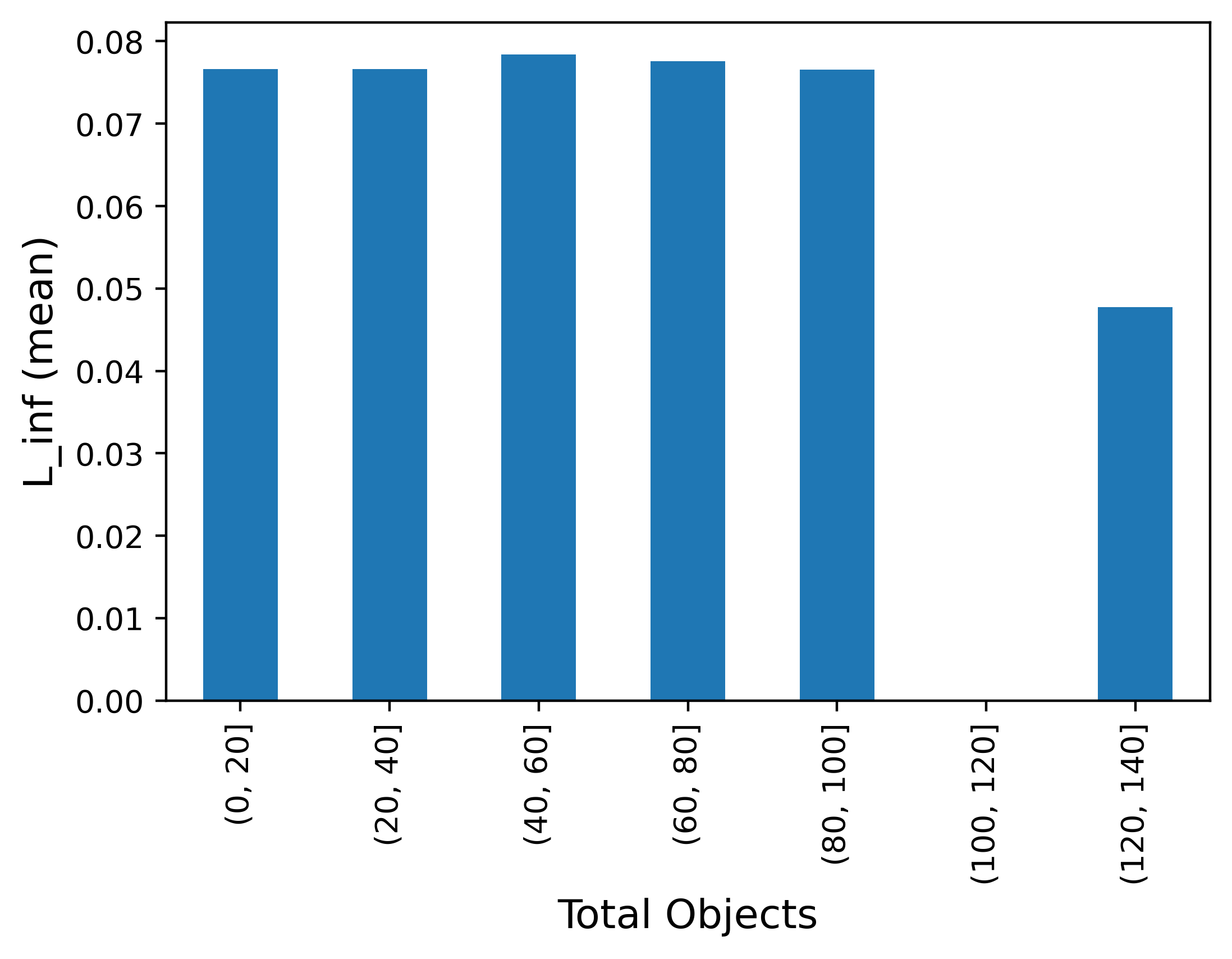}
  \end{subfigure}
\caption{The effect of the number of objects on the initial amplification: the average evasion budget as a function of the number of objects detected with $amplification_{0}$ (normalized RGB values 0-1).}
\label{fig:number}
\end{figure}

Fig. \ref{fig:number} presents the results for additional metrics (MSE and $L_{\infty}$) used to demonstrate the effect of the number of objects detected with $amplification_{0} $ (discussed further in Section~\ref{sec:evasion}).



\subsection{Dataset Inference Attack: Analysis Details}
The material below supplements the information presented in Section~\ref{sec:membership-inference}.

\paragraph{Chernoff bound} 
Let $X$ be the sum of $n$ random i.i.d. indicator variables, and let $\mu = \mathbf{E}[X]$. For any $0\leq\delta\leq 1,$
\begin{equation*}
    \mathbf{Pr}[|X-\mu|\geq\delta\mu] \leq 2e^{{-\mu\delta^{2}/3}}
\end{equation*}

For example, to bound the probability of the event $\size{\memberavg-\membermean}\leq h$, we set $\mu\gets n\membermean$, $\delta\gets h/\membermean$, where n is the assumed size of the member set.
\paragraph{Union-bounding failure probability} Our analysis relies on the following claim.
\begin{theorem}
If (1) $\size{\memberavg-\membermean}\leq h$ and (2) $\size{\nonmemberavg-\nonmembermean}\leq h$ and (3) $\size{\targetavg-\nonmembermean}\leq h$, then
$\size{\targetavg-\nonmemberavg}\leq\size{\targetavg-\memberavg}$.
\end{theorem}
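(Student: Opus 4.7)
The plan is to prove the inequality by combining two triangle-inequality bounds, one upper-bounding the left-hand side and one lower-bounding the right-hand side, with both bounds meeting at the value $2h$, where $h = |\mu_m - \mu_{\widetilde{m}}|/4$.

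First, I would upper-bound $|\hat{\mu}_{\mathcal{T}} - \hat{\mu}_{\widetilde{m}}|$. By the standard triangle inequality, $|\hat{\mu}_{\mathcal{T}} - \hat{\mu}_{\widetilde{m}}| \leq |\hat{\mu}_{\mathcal{T}} - \mu_{\widetilde{m}}| + |\mu_{\widetilde{m}} - \hat{\mu}_{\widetilde{m}}|$. Hypotheses (2) and (3) bound each summand by $h$, giving $|\hat{\mu}_{\mathcal{T}} - \hat{\mu}_{\widetilde{m}}| \leq 2h$.

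Next, I would lower-bound $|\hat{\mu}_{\mathcal{T}} - \hat{\mu}_m|$ by two applications of the reverse triangle inequality. First, $|\hat{\mu}_{\mathcal{T}} - \mu_m| \geq |\mu_{\widetilde{m}} - \mu_m| - |\hat{\mu}_{\mathcal{T}} - \mu_{\widetilde{m}}|$, which by the definition $|\mu_m-\mu_{\widetilde{m}}|=4h$ and hypothesis (3) is at least $4h - h = 3h$. Applying the reverse triangle inequality a second time together with hypothesis (1) yields $|\hat{\mu}_{\mathcal{T}} - \hat{\mu}_m| \geq |\hat{\mu}_{\mathcal{T}} - \mu_m| - |\mu_m - \hat{\mu}_m| \geq 3h - h = 2h$. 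Chaining the two bounds gives $|\hat{\mu}_{\mathcal{T}} - \hat{\mu}_{\widetilde{m}}| \leq 2h \leq |\hat{\mu}_{\mathcal{T}} - \hat{\mu}_m|$, which is the desired conclusion.

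The argument is essentially a two-line computation, so there is no real obstacle; the only delicate point is tracking signs and making sure the reverse triangle inequality is applied in the correct direction so that the two $h$ slacks line up with the $4h$ gap (this is precisely why $h$ was chosen to be one quarter of $|\mu_m-\mu_{\widetilde{m}}|$ rather than, say, one half: we lose one $h$ in step (3) when moving from $\mu_{\widetilde{m}}$ to $\hat{\mu}_{\mathcal{T}}$ on the nonmember side, and a further $h$ in step (1) when moving from $\mu_m$ to $\hat{\mu}_m$ on the member side, leaving the exact slack $2h$ needed to dominate the upper bound). If a strict inequality were desired in the conclusion, I would sharpen hypothesis (3) to a strict inequality or replace $4h$ by a slightly larger spread; as stated, the non-strict version follows directly.
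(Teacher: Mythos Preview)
Your proof is correct and essentially mirrors the paper's own argument: both first use (2) and (3) with the triangle inequality to get $\size{\targetavg-\nonmemberavg}\leq 2h$, and then use (1), (3), and $\size{\membermean-\nonmembermean}=4h$ to force $\size{\targetavg-\memberavg}\geq 2h$. The only cosmetic difference is that the paper phrases the second half as a proof by contradiction (assuming $\size{\targetavg-\memberavg}<2h$ and deriving $\size{\membermean-\nonmembermean}<4h$), whereas you argue directly via two applications of the reverse triangle inequality; the underlying computation is identical.
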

\begin{proof}\renewcommand{\qedsymbol}{}

From (2), (3), and triangle inequality, we know that $\size{\nonmemberavg-\targetavg}\leq 2h$. Assume the claim is incorrect, implying that $\size{\memberavg-\targetavg}<2h$. Then we obtain $\size{\membermean-\nonmembermean}=\size{\membermean-\targetavg+\targetavg-\nonmembermean}\leq\size{\membermean-\targetavg}+\size{\targetavg-\nonmembermean}\leq\size{\membermean-\memberavg}+\size{\memberavg-\targetavg}+\size{\targetavg-\nonmembermean}<4h$, which contradicts how $h$ is chosen.

\end{proof}

\footnotesize 
\Urlmuskip=0mu plus 1mu\relax

\end{document}